\theoremstyle{plain}
\newtheorem{proposition}{Proposition}[section]
\newtheorem{lemma}[proposition]{Lemma}
\newtheorem{assumption}[proposition]{Assumption}
\newtheorem{problem}[proposition]{Problem}
\theoremstyle{remark}
\newtheorem*{remark}{Remark}
\numberwithin{equation}{section}
\newcommand{\E}{\mathbb{E}}
\newcommand{\PP}{\mathbb{P}}
\newcommand{\dd}{\mathrm{d}}
\newcommand{\barF}{\bar F}
\newcommand{\haz}{h}
\newcommand{\R}{\mathbb{R}}
\setlist[itemize]{leftmargin=*, itemsep=2pt, topsep=2pt}
\setlist[enumerate]{leftmargin=*, itemsep=2pt, topsep=2pt}
\title{Optimal Transfer Mechanism for Municipal Soft Budget Constraints in Newfoundland}
\author{Xinli (Russel) Guo\thanks{Email: \texttt{xinli.guo@etu.univ-paris1.fr}. I wish to thank Professor Marcus Pivato and Professor Hélène Huber for their
valuable suggestions and encouragement. I also acknowledge the helpful discussions with peers at Université Paris 1 Panthéon-Sorbonne, which contributed to
improving the clarity of the analysis. All remaining errors are my own.}}
\date{Auguest 2025}
\begin{document}

\maketitle

\begin{abstract}
Newfoundland and Labrador's municipalities face severe soft budget pressures due to narrow tax bases, high fixed service costs, and volatile resource revenues. We develop a Stackelberg style mechanism design model in which the province commits at $t=0$ to an ex ante grant schedule and an ex post bailout rule. Municipalities privately observe their fiscal need type, choose effort, investment, and debt, and may receive bailouts when deficits exceed a statutory threshold. Under convexity and single crossing, the problem reduces to one dimensional screening and \emph{admits} a tractable transfer mechanism with quadratic bailout costs and a statutory cap. The optimal \emph{ex ante} rule is \emph{threshold--cap}; under discretionary rescue at $t=2$, it becomes \emph{threshold--linear--cap}. A knife-edge inequality yields a self-consistent no bailout regime, and an explicit discount factor threshold renders hard budgets dynamically credible. We emphasize a class of monotone \emph{threshold} signal rules; under this class, grant crowd out is null almost everywhere, which justifies the constant grant weight used in closed form expressions. The closed form characterization provides a policy template that maps to Newfoundland's institutions and clarifies the micro-data required for future calibration.
\end{abstract}

\bigskip
\noindent\textbf{JEL:} H71; H77; D82.\\
\noindent\textbf{Keywords:} soft budget constraints; mechanism design; municipal finance

\medskip
\newpage

\section{Introduction}\label{sec:intro}
Local public finances in Newfoundland and Labrador (NL) have long faced a
three--way squeeze: a limited own--source tax base, large fixed costs of delivering
services to sparsely populated areas, and the province’s chronic macro--fiscal
stress.\footnote{NL's per-capita debt-service burden has ranked first among
Canadian provinces since at least 2020; see \citet{nl_finance2020}.}
Few Canadian provinces illustrate the political economy of
\emph{soft budget constraints} (SBCs) so vividly as
NL. 

SBC, described by
\citet{kornai1986}, arise once an upper-tier government develops a bailout record and
lower-tier entities rationally expand spending or borrowing in expectation of
future relief. On the revenue side, sparsely populated outports yield a narrow
own-source tax base; on the cost side, geography and winter logistics impose
one of the highest per-capita service bills in the country.
When commodity downturns hit---most recently in 2015--2016---the
province repeatedly ``stepped in'': it assumed municipal debts \emph{in toto},
stretched repayment schedules, and negotiated ad hoc funding packages with
Ottawa.\footnote{See the 25 May 2016 press release,
\citet{GovNFL_2016}.}
By normalizing such rescues NL has created precisely the expectation of
future relief that \citet{kornai1986} warned about.

Although a rich literature analyzes SBCs in transition economies and US/EU
federal systems, two limitations stand out:

\begin{enumerate}[label=(\roman*),leftmargin=*]
\item \textbf{Timing.}  
      Most formal models cast upper and lower tiers as simultaneous Nash
      players; real-world transfers are decided \emph{sequentially}.
\item \textbf{Granularity.}  
      Empirical work on Canada concentrates on provincial--federal
      equalization; the municipal layer---where soft budgets can first bite---has
      received scant theoretical attention
      \citep{bird_2012,sancton_govcfs_2014,boothe_2015}.
\end{enumerate}

In this paper, we build a \emph{Stackelberg‐style screening model} in which the
province first commits to a two-part menu
$\bigl(T(\hat\theta),b(\hat\theta)\bigr)$; a municipality then
privately observes its \emph{fiscal need} type~$\theta$ and reports $\hat\theta$; finally the
province may---\emph{ex post}---augment the transfer with a bailout schedule $\beta(\cdot)$ that maps the noisy gap signal $\hat G$ to a payment.
We adopt the \textbf{implementable payout convention}
\[
  \text{Realized payout}\quad p(\hat G,\hat\theta)
  \;=\; \mathbf 1\{\hat G>0\}\cdot \min\bigl\{\,\beta(\hat G),\; b(\hat\theta),\; \hat G\,\bigr\},
\]
so that the \emph{signal-based rule} $\beta$ is hard-capped by the \emph{type-based cap} $b$ and by the observed signal. This aligns the model with administrative practice, avoids paying above the observed gap, and acknowledges that mis-payment risk stems only from signal noise.

The model is theoretical, yet four closed-form results offer
a tractable benchmark for practitioners and future empirical work in NL:

\begin{enumerate}[label=(\alph*),leftmargin=*]
\item \textbf{Dimensionality reduction.}
      The three-stage game collapses to one-dimensional adverse selection with two scalar instruments $(T,b)$; see Proposition~\ref{prop:reduction}.

\item \textbf{Optimal transfer rule and knife-edge.}
      With linear--quadratic provincial costs and a statutory cap, the IC--IR--LL optimum is \emph{threshold--cap}; under a discretionary $t{=}2$ rescue (no commitment), the realized rule is \emph{threshold--linear--cap} (Appendix~\ref{subsec:disc}).
      A self-consistent no-bailout regime obtains whenever
      \[
      \boxed{\,\alpha\omega_T \;\ge\; \gamma\omega_b \cdot \sup_{\theta} \haz(\theta)\,},\qquad
      \haz(\theta)=\tfrac{f(\theta)}{\barF(\theta)},
      \]
      see Proposition~\ref{lem:nobailout} and \citet{BarlowProschan1975,ShakedShanthikumar2007}.

\item \textbf{Policy template for NL.}
      The resulting two-parameter grant formula $(\theta^{\min},\theta^{\dagger})$ maps to NL's Municipal Operating Grant and identifies the micro-data needed to calibrate $(\gamma,\kappa,\lambda_T,\alpha,\omega_b)$; see Proposition~\ref{prop:opt_cap}.
\end{enumerate}

\noindent\textit{Roadmap.} Section~\ref{sec:model} develops the Stackelberg model and derives the
reduced form.  
Section~\ref{sec:Tstar} solves for the optimal transfer schedule and proves its
second-best efficiency, introducing the thresholds $\theta^{\min}$ and $\theta^{\dagger}$ (defined in~\eqref{eq:theta-dagger}).  
Section~\ref{sec:policy} draws policy lessons and comparative statics for NL.  
Section~\ref{sec:conclusion} concludes and outlines empirical extensions.

\section{Literature Review}\label{sec:lit}

The paper intersects three strands of work:  
(i)~soft budget constraints (SBC) in multi-tier public finance,  
(ii)~mechanism design with Stackelberg leadership, and  
(iii)~Canadian municipal-finance empirics.

\subsection{Soft Budget Constraints}

The SBC idea begins with \citet{kornai1986}.  Early formalizations
\citep{kornai_maskin_roland_2003} show that ex post efficient bailouts
undermine ex ante effort and borrowing; see also
\citet{dewatripont_maskin_1995} for a dynamic commitment model.
Applications to intergovernmental finance include
\citet{weingast_1995} on U.S.\ states,
\citet{bordignon_2001} on European stability pacts, and the survey
by \citet{goodspeed_2016}.
Recent papers ask \emph{when} an upper tier can credibly refuse rescues:
\citet{AmadorEtAl2021AER} derive fiscal limits under limited commitment,
\citet{PavanSegal2023JET} study repeated screening, and
\citet{AcemogluJackson2024RESTUD} analyze relational contracts with hidden
actions.  Yet these models stop short of a closed-form, policy-ready transfer
rule.  We fill that gap by producing an implementable
\emph{threshold--cap} schedule under commitment (and \emph{threshold--linear--cap} under $t{=}2$ discretion) and a single discount-factor test for
self-enforcing hard budgets.

\subsection{Mechanism Design and Stackelberg Leadership}

Incentive-compatible grant design dates back to
\citet{bordignon_montolio_picconi_2003}, who use a single matching grant under
simultaneous moves.  \citet{toma_2013} introduces leader--follower timing but
assumes full information.  \citet{chen_silverman_2019} obtain threshold
payments in a one-shot health-care model with asymmetric costs, yet ignore
ex post instruments and dynamic credibility.  
Our contribution is twofold:  
(i)~a \textit{two-instrument} Stackelberg screen that nests bailout and
no bailout regimes in the marginal-cost inequality
$\alpha\omega_T\gtreqless\gamma\omega_b$ (with the exact knife-edge refined by the hazard bound~\eqref{lem:nobailout});  
(ii)~a closed-form critical discount factor that pins down dynamic
self-enforcement.

\subsection{Canadian Municipal Finance}

Empirical work on Canadian municipalities examines fiscal capacity and service
costs \citep{bird_2012,sancton_govcfs_2014} or tax-base sharing
\citep{dahlby_ferede_2021}, and analyzes borrowing limits
\citep{found_tompson_2020}.  Evidence on municipal-level SBC is scarce; most
studies focus on federal--provincial equalization
\citep{boothe_2015}.  A notable exception is
\citet{BraccoDoyle2024JPubE}, who exploit BC’s 2004 debt-limit reform, and the
cross-country benchmark in \citet{Rodden2006}.
None, however, links provincial bailout policy to a mechanism-design
benchmark.  Our theory provides that benchmark and outlines the data needed
for future identification once municipal micro-panels for NL---or other provinces---become available.

\section{Provincial--Local Fiscal Relations in Newfoundland and Labrador}
\label{sec:background}

Although the model is self-contained, its parameters map neatly onto four
institutional frictions that characterize NL.
Documenting those frictions clarifies both the choice of primitives and the
comparative-statics exercises that follow.

\subsection{A Heterogeneous Local Landscape}

In NL, more than 260 local units fall into two legal categories.  Incorporated
municipalities possess broad tax powers---chiefly the property tax---whereas
unincorporated Local Service Districts (LSDs) finance specific services
through earmarked levies.  LSDs are typically small, sparsely populated, and
administratively thin; the province therefore faces persistent political
pressure to guarantee minimum service levels even when the local tax base is
weak.  We model this dispersion of own-source capacity as private information:
each jurisdiction’s \emph{fiscal need type} \(\theta\) indexes the severity of its funding gap (higher $=$ weaker capacity / larger need).

\subsection{Transfer Instruments and Model Notation}

Three provincial channels matter and correspond one-for-one to our model
variables:

\begin{description}[leftmargin=*]
\item[Municipal Operating Grant (MOG).]  
      An \emph{unconditional} operating transfer that forms the baseline
      grant~\(T(\theta)\).

\item[Canada Community--Building Fund (CCBF).]  
      Formula-based or cost-shared capital money.  
      Because these flows are largely exogenous to short-run fiscal gaps we
      treat them as a constant background term $g$ and abstract from them in the
      formal screening problem.

\item[Special Assistance.]  
      Ad-hoc subsidies or debt relief for distressed communities; this is the
      discretionary bailout instrument~\(b(\theta)\).
\end{description}

\subsection{Borrowing Oversight and Commitment Frictions}

Long-term municipal borrowing requires ministerial approval and often carries
an explicit provincial guarantee.  Once a community approaches default the
province internalizes the externality of municipal bankruptcy and almost
always chooses rescue over liquidation.  Anticipating that bias, low-capacity
jurisdictions rationally relax ex ante effort---exactly the soft budget channel
formalized in our Stackelberg mechanism.

\subsection{Stylized Fiscal Facts and Parameter Guidance}

\begin{enumerate}[label=(\alph*),leftmargin=1.2em]
\item \textbf{High provincial debt service.}  
      NL's per-capita debt-service burden tops the Canadian league table,
      implying a high marginal opportunity cost~\(\gamma\) of each grant
      dollar.

\item \textbf{Transfer-dependent small jurisdictions.}  
      For many LSDs, unconditional grants finance over half of current
      spending, whereas bailouts are politically constrained and arrive late.
      Thus \(\omega_T>\omega_b\).

\item \textbf{Administrative capacity gaps.}  
      Uneven record-keeping and audit lags create the information asymmetry
      that justifies the single-crossing (or virtual monotonicity) used in the model.

\item \textbf{Prohibitively high political cost of very large rescues.}  
      Public debate in NL treats bailouts above a ``headline'' threshold as
      politically costly, consistent with the convex cost term
      \(\tfrac{\kappa}{2}b^{2}\) in the provincial objective.
\end{enumerate}

\medskip
\noindent\textit{From institutions to the model.}  
Sections~\ref{sec:model}--\ref{sec:Tstar} fold the institutional environment into a
two-instrument screen $(T,b)$: an ex ante operating grant and an ex post bailout.
All primitives that do not affect screening incentives directly (e.g.\ CCBF capital transfers $g$) are absorbed into fixed terms.

\section{Model}\label{sec:model}

We study the interaction between a single provincial government ~$P$ and a
continuum of local jurisdictions $i\in\mathcal I\subset[0,1]$ representing
municipalities or Local Service Districts (LSDs). 

\paragraph{Type convention.}
Throughout, we define the private type $\theta\in[\underline\theta,\bar\theta]$ as a \emph{fiscal-need index}:
a higher $\theta$ corresponds to a weaker local tax base / higher per-unit service cost,
hence a larger underlying funding gap. This convention is used consistently in the theory and background sections.

\subsection{Technologies and Preferences}

\begin{description}[leftmargin=0cm, labelsep=0.5cm, style=nextline]
  \item\textbf{Basic services} Each local jurisdiction \( i \) delivers a bundle of essential public services
  \( q = (\text{water}, \text{sewer}, \text{roads}, \ldots) \).
  The monetary cost is modeled by \( C(q,\theta) \), where \( \theta \)
  denotes the jurisdiction’s fiscal need.\\
  \emph{Higher} \( \theta \) (narrow tax base, difficult geography)
  \(\Rightarrow\) higher marginal cost.

  \item\textbf{Heterogeneous fiscal need} Fiscal need is heterogeneous across municipalities. We treat \( \theta \) as a draw
  from a continuous distribution \( f(\theta) \) with support
  \( [\underline{\theta},\bar{\theta}] \), so both high- and low-need jurisdictions
  are present in the province.

  \item\textbf{Local effort} Jurisdictional effort \( e \ge 0 \) --- tax enforcement, fee collection, grant writing ---
  generates own-source revenue \( R(e,\theta) \) with $R'_e>0$ and $R''_{ee}<0$.

  \item\textbf{Disutility of effort} Effort imposes a linear utility cost 
  \begin{align*}
  \text{Disutility}=-\phi e, \qquad \phi>0,
\end{align*}
on residents or officials. The linear form keeps derivations tractable while capturing the political and administrative
  cost of higher taxation.

\item\textbf{Service and investment utility.}
Delivering services and undertaking investment generate direct benefits.
Let $B,\Gamma:\mathbb{R}_+\to\mathbb{R}$ be $C^2$ and normalized by $B(0)=\Gamma(0)=0$.
We assume:
\begin{itemize}[label={},leftmargin=0pt,itemsep=0.2em]
\item \textbf{Monotonicity.} $B'(q)>0$ and $\Gamma'(I)>0$ for all $q,I\ge0$.
\item \textbf{Concavity.} $B''(q)\le0$ and $\Gamma''(I)\le0$ for all $q,I\ge0$.
\item \textbf{Diminishing marginal returns.} $\lim_{q\to\infty}B'(q)=0$ and $\lim_{I\to\infty}\Gamma'(I)=0$.
\item \textbf{Bounded marginal utilities.} $\sup_{q\ge0}B'(q)<\infty$ and $\sup_{I\ge0}\Gamma'(I)<\infty$.
\end{itemize}
These conditions ensure well-behaved choices and allow $B(q)+\Gamma(I)$ to enter the objective via bounded terms.
\end{description}

\paragraph{Transfers}
The Province can effect intergovernmental transfers in three distinct ways:
\begin{enumerate}[label=(\roman*)]
\item \textbf{Unconditional grant} $\tau$, set \emph{ex ante} before local effort choices;
\item \textbf{Matching transfer} at provincial share $s$ on capital outlays $I$;
\item \textbf{Ex--post bailout} $b\ge 0$, extended only if a realized gap remains after
fiscal shocks $\varepsilon$ are realized.
\end{enumerate}
Before local choices, the Province commits to three \emph{transfer instruments}.  Table~\ref{tab:transfers_table} summarizes timing and incentives.

Given these transfers, the \textbf{one-period cash-flow constraint} at $t=1$ (before any payout) is
\begin{align}
  G &= \bigl[C\bigl(q,\theta\bigr)+(1-s)I+rD\bigr] 
        -\bigl[R(e,\theta)+\tau+g+sI+D\bigr] + \varepsilon \label{eq:gap_pre}
\end{align}
\footnote{Notation reminder: in the reduced-form mechanism we write $T(\theta)$ for the unconditional operating grant that corresponds to the empirical instrument $\tau$; and $g$ captures largely exogenous capital transfers (e.g.\ CCBF/Gas-Tax) that are treated as constants in screening.}

If $G>0$ a funding gap exists.  Province observes a noisy signal $\hat{G}=G+\eta$ and pays the \emph{realized payout}
\[
  p(\hat G,\hat\theta)
  \;=\; \mathbf 1\{\hat G>0\}\cdot\min\{\beta(\hat G),b(\hat\theta),\hat G\}
  \;\in\;[0,\hat G],
\]
at $t=2$ under commitment to $(\beta,b)$.

\paragraph{Softness metric}
We define the \emph{softness index} as the ex--ante rescue probability
\begin{equation*}
  \pi \;=\; \PP\bigl[p(\hat{G},\hat\theta)\ge G\bigr]\in[0,1],
\end{equation*}
namely the \emph{probability that a realized funding gap will be fully covered}:
$\pi=0$ is a hard budget constraint; $\pi=1$ is full insurance.

\paragraph{Aggregate softness.}
Throughout Sections~\ref{sub:effort}, \ref{subsec:reduction}, and \ref{sec:stage_game} we treat $\pi$ as the \emph{aggregate}
ex ante rescue probability (integrated over types). It is a descriptive index; in the baseline with threshold $\beta$ it does not affect the grant weight derived below.

\subsection{Local effort $e$: incentives and marginal condition}\label{sub:effort}
Effort $e\ge0$ (tax enforcement, fee collection, administrative intensity) generates own--source revenue
$R(e,\theta)$ with $R'_e>0$ and $R''_{ee}<0$.  Effort is personally/politically costly as $-\phi e$.


\begin{assumption}[Primitives]\label{ass:primitives}
Types $\theta$ lie in $[\underline{\theta},\bar{\theta}]$ with density $f>0$.
The local cost and revenue functions satisfy, for all $\theta$,
\[
C'_q(q,\theta)>0,\qquad C''_{qq}(q,\theta)\ge0,\qquad
R'_e(e,\theta)>0,\qquad R''_{ee}(e,\theta)<0.
\]
\end{assumption}

\begin{assumption}[Observation \& rule regularity]\label{ass:regularity}
The signal rule $\beta$ is nondecreasing and a.e.\ differentiable with slope in $[0,1)$; the audit noise $\eta$ has a continuous density $f_\eta$ with bounded tails; and $(e,\theta)\mapsto R'_e(e,\theta)$ is continuous. These ensure interchange of expectation and differentiation and validate the marginal-probability formulas below.
\end{assumption}

\begin{assumption}[Signal MLRP]\label{ass:mlrp}
For $\theta_2>\theta_1$, the family of signals $\{\hat G\mid\theta\}$ satisfies the monotone-likelihood-ratio property (MLRP), so that for any nondecreasing $\varphi$, $\E[\varphi(\hat G)\mid\theta_2]\ge \E[\varphi(\hat G)\mid\theta_1]$. Since $\beta$ is nondecreasing, the cap-slope objects defined below are nondecreasing in $\theta$.
\end{assumption}

\begin{assumption}[Restriction to threshold signal rules]\label{ass:beta-threshold}
We restrict attention to signal-based payout rules $\beta$ that are nondecreasing and piecewise constant in $\hat G$ (threshold rules). This class is consistent with administrative practice and eliminates effort--report interactions almost everywhere.
\end{assumption}

\paragraph{Observation frictions and default.}
Default occurs iff $p(\hat G,\hat\theta)<G$. Because
of information frictions, municipalities rationally expect a positive rescue probability $\pi>0$. Anticipating the chance of a bailout, they optimally reduce tax effort $e$ and rely more on debt $D$.\footnote{For axiomatic treatments of decision under noisy or imperfect perception, see \citet{PivatoVergopoulos2020JME}, which provides a clean way to model observation constraints consistent with our audit-noise setup.}

\begin{assumption}[Effort independence via threshold $\beta$]\label{ass:report-invariance}
Under Assumptions~\ref{ass:primitives}--\ref{ass:beta-threshold}, the first-order condition for effort on the cap-slack branch contains no term depending on the report $\hat\theta$; $e^{\ast}(\theta)$ is report-independent up to boundary-density terms on the cap-binding set.
\end{assumption}

\noindent\emph{Technical detail.} See Lemma~\ref{lem:e-indep} in Appendix~\ref{app:proofs}.

\subsection{Annual timeline: three decision stages}\label{subsec:timeline}

We normalize one fiscal year to $t\in\{0,1,2\}$; the sequence repeats every year.

\paragraph{Stage $t=0$ (policy commitment).}
The province announces a vector
$\Pi=(\tau,s,\bar D,\beta,b)$ where $\beta:\R_+\to\R_+$ is the signal-based payout rule implemented at $t=2$ and $b(\cdot)$ is a type-based cap.

\paragraph{Stage $t=1$ (local choices and gap realization).}
After observing its private type $\theta\sim f$, the municipality selects effort $e\ge0$, capital $I\ge0$, and debt $0\le D\le\bar D$. A mean-zero fiscal shock $\varepsilon$ is then realized and the pre-payout gap is $G$ from \eqref{eq:gap_pre}.

\paragraph{Stage $t=2$ (signal, payout, default test).}
The Province observes $\hat G=G+\eta$ and pays $p(\hat G,\hat\theta)=\mathbf 1\{\hat G>0\}\min\{\beta(\hat G),\,b(\hat\theta),\,\hat G\}$. Default occurs iff $p(\hat G,\hat\theta)<G$.

\subsection{Local optimization at $t=1$}

Given policy $\Pi$, the municipality solves
\begin{equation}\label{eq:local_prog}
\begin{aligned}
\max_{e,q,I,D}\; & 
  \E_{\varepsilon,\eta}\!\Bigl[
      B(q)+\Gamma(I)+R(e,\theta)-\phi\,e
      -\varphi\,\mathbf1\!\{\,p(\hat G,\hat\theta)<G\,\}
      +\omega_b\,p(\hat G,\hat\theta)
  \Bigr] \\[ -2pt]
\text{s.t.}\quad
&G = C(q,\theta)+(1-s)I+rD
      -\bigl[R(e,\theta)+\tau+g+sI+D\bigr]
      +\varepsilon, \\[4pt]
&0 \le D \le \bar D,\qquad e,q,I \ge 0 .
\end{aligned}
\end{equation}
Here $\omega_b>0$ is treated as a constant marginal-utility weight that
local decision-makers attach to each dollar of bailout; see
Remark~\ref{rem:omega-const}.

\paragraph{Effort incentives via the default and marginal-rescue channels.}
We report the marginal default formula for the \emph{signal branch} (cap slack), which is the relevant case almost everywhere under threshold/linear $\beta$; the cap only binds in an upper region where the marginal-rescue channel shuts down.

\noindent\emph{Technical detail.} See Lemma~\ref{lem:margprob} in Appendix~\ref{app:proofs}.

\noindent
Since $\partial \E[p(\hat G,\hat\theta)]/\partial e=-R'_e(e,\theta)\,\E[\beta'(\hat G)\,\mathbf1\{\beta(\hat G)<b(\hat\theta)\}]$, the interior FOC on the signal branch is
\begin{equation}\label{eq:FOC_e_correct}
  R'_e\!\bigl(e^{\ast}(\theta);\theta\bigr)\,
  \Bigl\{
    1
    +\varphi\,\E\!\left[f_\eta\!\bigl(\hat{G}-\beta(\hat{G})\bigr)\bigl(1-\beta'(\hat G)\bigr)\right]
    -\omega_b\,\E\!\bigl[\beta'(\hat G)\,\mathbf1\{\beta(\hat G)<b(\hat\theta)\}\bigr]
  \Bigr\}
  = \phi .
\end{equation}
For threshold (piecewise constant) $\beta$, $\beta'(\hat G)=0$ a.e., hence the last term vanishes.

\paragraph{Comparative statics.}
From \eqref{eq:FOC_e_correct}, $\partial e^{\ast}/\partial \varphi>0$ and $\partial e^{\ast}/\partial f_\eta>0$.
The sign of $\partial e^{\ast}/\partial \omega_b$ is $\le 0$ through the $-\omega_b\,\E[\beta'(\hat G)\mathbf1\{\beta<b\}]$ term; under threshold rules, $\partial e^{\ast}/\partial \omega_b=0$.


\subsection{Reduction to One--Dimensional Screening}
\label{subsec:reduction}

The three--stage environment features effort $e$, capital $I$, debt $D$, service $q$, and shocks $(\varepsilon,\eta)$. We now show
that---under the convexity/monotonicity conditions already stated---these
objects can be optimized out, leaving a one--dimensional adverse--selection
problem with quasi--linear utility in $(T,b)$, where $b$ is a \emph{cap parameter}.

\begin{assumption}[Rare cap binding]\label{ass:cap-slack}
There exists $\varepsilon<1$ such that
$\PP_\theta[\beta(\hat G)\ge b(\theta)]\le\varepsilon$ for every feasible
cap profile $b(\cdot)$ in the optimal mechanism.
This ensures the cap binds only on an upper tail with measure $\le\varepsilon$
and allows us to treat $\tilde b(\hat\theta;\theta)$ and $e^\star(\theta)$ as
\emph{$O(\varepsilon)$-close} to the cap-slack expressions used in
\eqref{eq:UL-reduced}.
\end{assumption}

\noindent\emph{Clarification.} Assumption~\ref{ass:cap-slack} refers to the probability
(over the audit noise $\eta$) that a given type's realization lies on the cap-binding
branch; it does not exclude the existence of a type region $[\theta^\dagger,\bar\theta]$
where the cap $b(\theta)=\bar b$ binds ex ante.

\paragraph{Step~1. Local optimization.}
Fix $(\tau,s,\bar D,\beta)$ and a true type $\theta$. Minimizing the pre-bailout resource block delivers $(q^{\ast},I^{\ast},D^{\ast})$ and reduced cost $C_0(\theta)$, independent of the report.

\paragraph{Step~2. Effort choice.}
Effort $e^\ast(\theta)$ is pinned down by \eqref{eq:FOC_e_correct} (cap slack almost everywhere).

\paragraph{Step~3. Quasi-linear reduced form with payout cap.}
Define the \emph{expected payout under cap}, conditional on type,
\[
  \tilde b(\hat\theta;\theta)
   = \E\bigl[\min\{\beta(\hat G),b(\hat\theta)\}\,\big|\,\theta\bigr].
\]
Then the interim utility from reporting $\hat\theta$ can be written
\begin{equation}
  U_L(\hat\theta,\theta)=\lambda_T(\theta)\,T(\hat\theta)
    + \omega_b\,\tilde b(\hat\theta;\theta)+K(\theta),
  \label{eq:UL-reduced}
\end{equation}
with a grant weight
\[
  \lambda_T(\theta)\;=\;\omega_T\; -\; \omega_b\,\E\Bigl[\frac{\partial p}{\partial T}\,\Big|\,\theta\Bigr].
\]
Under Assumption~\ref{ass:beta-threshold} and continuous noise, $\beta'(\hat G)=0$ almost everywhere, so $\lambda_T(\theta)=\omega_T$.

\begin{lemma}[Grant crowd-out factor]\label{lem:lambdaT}
With $\hat G=G+\eta$ and $G$ decreasing in $T$ one-for-one, the marginal effect of $T$ on the realized payout is
\[
  \frac{\partial}{\partial T}\E[p(\hat G,\hat\theta)\mid\theta]
  = -\,\E\!\left[\beta'(\hat G)\,\mathbf 1\{\beta(\hat G)<b(\hat\theta)\}\,\middle|\,\theta\right]
  + \text{boundary terms}.
\]
If $\beta$ is threshold (piecewise constant), $\beta'(\hat G)=0$ a.e. and the boundary terms vanish under continuous noise, hence $\lambda_T(\theta)=\omega_T$.
If $\beta$ has an interior linear branch with slope $m\in(0,1)$ on the cap-slack set, then $\lambda_T(\theta)=\omega_T-\omega_b\,m\cdot \PP_\theta[\text{cap slack and } \hat G \text{ in linear range}]$.
\end{lemma}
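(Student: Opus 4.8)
The plan is to compute the derivative by differentiating under the integral sign and then reading off the pointwise derivative $\partial p/\partial\hat G$ branch by branch. First I would condition on the true type $\theta$ and, using Steps~1--2 of the reduction, substitute the optimized block so that $\hat G = G_0(\theta)-T+\zeta$ with $\zeta=\varepsilon+\eta$ and $G_0(\theta)$ a type-dependent constant; since $\varepsilon$ is mean-zero and $\eta$ has a continuous density with bounded tails (Assumption~\ref{ass:regularity}), the convolution $\zeta$ inherits a continuous density given $\theta$. Because the parameter $T$ enters only through the location shift $\partial\hat G/\partial T=-1$ (holding the noise draw fixed), I would write
\[
  \frac{\partial}{\partial T}\E[p(\hat G,\hat\theta)\mid\theta]
  =\E\!\left[\frac{\partial p}{\partial\hat G}\cdot\frac{\partial\hat G}{\partial T}\,\middle|\,\theta\right]
  =-\,\E\!\left[\frac{\partial p}{\partial\hat G}\,\middle|\,\theta\right],
\]
the interchange of $\partial_T$ and $\E$ being exactly the regularity promised by Assumption~\ref{ass:regularity}. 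A clean way to license it is to note that $p$ is $1$-Lipschitz in $\hat G$ (a minimum of the maps $\beta,b,\hat G$, the first of slope in $[0,1)$, times the cutoff $\mathbf 1\{\hat G>0\}$, across which $p$ is continuous since $p\le\hat G\to0^+$), so difference quotients are dominated by the constant $1$ and dominated convergence applies.

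Next I would decompose $\partial p/\partial\hat G$ according to which argument attains the minimum. On the zero-payout set $\{\hat G\le0\}$ the derivative is $0$; on the signal-active (cap-slack) set $\{\beta(\hat G)<b(\hat\theta),\ \beta(\hat G)\le\hat G\}$ we have $p=\beta(\hat G)$ and derivative $\beta'(\hat G)$; on the cap-binding set $\{b(\hat\theta)\le\beta(\hat G)\wedge\hat G\}$ we have $p=b(\hat\theta)$ and derivative $0$; and on the gap-binding set $\{\hat G<\beta(\hat G)\wedge b(\hat\theta)\}$ we have $p=\hat G$ and derivative $1$. Collecting the nonzero pieces gives the displayed main term $-\E[\beta'(\hat G)\,\mathbf 1\{\beta(\hat G)<b(\hat\theta)\}\mid\theta]$ together with a residual $-\PP_\theta[\text{gap-binding}]$ and the contribution at the cutoff $\hat G=0$; these last two are what the statement records as ``boundary terms.''

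The third step is to show the boundary terms vanish. The cutoff contribution is zero because $p$ is continuous at $\hat G=0$, so differentiating $\mathbf 1\{\hat G>0\}$ produces no atom once integrated against the continuous density of $\hat G$; likewise the kinks where the active branch of the minimum switches form a Lebesgue-null set and carry no mass under a continuous density. For the gap-binding set I would invoke the implementable-payout convention together with the admissible class of Assumption~\ref{ass:beta-threshold}: a threshold (piecewise-constant) rule is designed to promise weakly less than the observed signal, so $\beta(\hat G)\le\hat G$ holds off a null set and the gap-binding set is null. Hence both residuals drop, $\beta'(\hat G)=0$ almost everywhere for a threshold rule, and the derivative is $0$, which plugged into the definition of $\lambda_T(\theta)$ yields $\lambda_T(\theta)=\omega_T$. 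For the linear-branch variant, $\beta'\equiv m$ on the interior range and slope $m<1$ keeps $\beta$ below the $45^\circ$ line so the gap cap still does not bind there; the main term collapses to $-m\,\PP_\theta[\text{cap slack and }\hat G\text{ in the linear range}]$, giving $\lambda_T(\theta)=\omega_T-\omega_b\,m\,\PP_\theta[\cdots]$.

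I expect the main obstacle to be the rigorous handling of the non-smoothness rather than any single computation: $p$ is only piecewise $C^1$, with kinks at the switch-points of the minimum and a one-sided kink at $\hat G=0$, so the crux is establishing that no Dirac contribution survives the expectation. The continuous-noise hypothesis (Assumption~\ref{ass:regularity}) and the $1$-Lipschitz bound do the real work here, and the one genuinely model-specific point is the argument that the gap-binding set is null --- this is where the implementable convention $\beta(\hat G)\le\hat G$ and the threshold class must be used, and it is the step most sensitive to how the admissible rules are specified.
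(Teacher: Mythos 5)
Your proposal takes essentially the same route as the paper's own proof: exploit the location-shift structure $\partial_T\hat G=-1$, differentiate under the expectation, decompose $\partial p/\partial\hat G$ according to which argument of the minimum is active (zero below the cutoff, $\beta'(\hat G)$ on the cap-slack branch, zero on the cap-binding branch), and argue that the remaining contributions vanish under continuous noise, with the gap-binding branch eliminated by $\beta(\hat G)\le\hat G$ (the paper derives this from $\beta(0)=0$ and slope in $[0,1)$; you assert it as a property of the admissible class). Your enumeration is in fact more explicit than the paper's---you isolate the gap-binding residual and the $\hat G=0$ cutoff, which the paper lumps into ``boundary terms''---and your continuity argument at $\hat G=0$ is a small genuine improvement.

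The step I must flag is your domination argument: you justify interchanging $\partial_T$ and $\E$ by claiming $p$ is $1$-Lipschitz in $\hat G$. That holds when $\beta$ is continuous with slope in $[0,1)$, but it fails in exactly the case the lemma headlines: a nontrivial threshold rule is piecewise constant, hence has jumps, and wherever $\beta$ is the active branch on both sides of a jump, $p$ jumps with it, so the difference quotients are not bounded by any constant. Concretely, for $\beta(\hat G)=\Delta\,\mathbf 1\{\hat G\ge t\}$ with $0<\Delta\le t$ and a slack cap, one gets $\E[p\mid\theta]=\Delta\,\PP_\theta[\hat G\ge t]$, whose $T$-derivative equals $-\Delta$ times the density of $\hat G$ at $t$: a nonzero, density-weighted jump term that is \emph{not} annihilated by the observation that the switch set $\{\hat G=t\}$ is Lebesgue-null. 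In fairness, the paper's own proof has the same soft spot (it appeals to dominated convergence over null switch sets), so your argument is no weaker than the published one; but neither derivation shows that these Dirac-type contributions of the distributional derivative of $\beta$ are among the ``boundary terms'' that vanish. An airtight version of the threshold-case conclusion $\lambda_T(\theta)=\omega_T$ would need either to restrict attention to threshold rules whose jump points lie outside the support of $\hat G$, or to state the result for the a.e.\ derivative while acknowledging that the jump terms survive in the exact derivative.
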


\begin{proposition}[Reduction]\label{prop:reduction}
Under Assumptions~\ref{ass:primitives}, \ref{ass:regularity}, \ref{ass:mlrp} and~\ref{ass:beta-threshold}, the original
moral-hazard problem is equivalent to a direct mechanism in which
municipal interim utility is quasi-linear in the \emph{cap parameter} $b$ through $\tilde b(\hat\theta;\theta)=\E[\min\{\beta(\hat G),b(\hat\theta)\}\mid\theta]$ as in \eqref{eq:UL-reduced}.
\end{proposition}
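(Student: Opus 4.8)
The plan is to establish the reduction by assembling the three local-optimization steps already sketched (Steps 1--3) into a single equivalence between the original three-stage moral-hazard-plus-adverse-selection problem and a static direct mechanism. First I would verify that, holding $(\tau,s,\bar D,\beta)$ fixed, the inner problem over $(q,I,D)$ separates from the report $\hat\theta$: because these variables enter the gap $G$ and the objective only through the true type $\theta$ and the committed instruments---not through $\hat\theta$---the minimized pre-bailout resource block yields a reduced cost $C_0(\theta)$ and optimizers $(q^\ast,I^\ast,D^\ast)$ that depend on $\theta$ alone. Concavity of $B,\Gamma$ and convexity of $C$ (Assumption~\ref{ass:primitives}) guarantee interior optimizers are well-defined. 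This collapses four of the five choice variables into $\theta$-indexed constants absorbed into the term $K(\theta)$ of \eqref{eq:UL-reduced}.

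Second I would pin down effort via the FOC \eqref{eq:FOC_e_correct}. The key point is that under the threshold restriction (Assumption~\ref{ass:beta-threshold}), $\beta'(\hat G)=0$ almost everywhere, so the marginal-rescue term vanishes and $e^\ast(\theta)$ becomes report-independent---this is exactly the content of Assumption~\ref{ass:report-invariance} and Lemma~\ref{lem:e-indep}. Consequently the effort contribution $R(e^\ast(\theta),\theta)-\phi\,e^\ast(\theta)$ is another $\theta$-only constant folded into $K(\theta)$. The only remaining channels through which the report $\hat\theta$ enters interim utility are the grant $T(\hat\theta)$ and the capped expected payout $\tilde b(\hat\theta;\theta)=\E[\min\{\beta(\hat G),b(\hat\theta)\}\mid\theta]$. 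Collecting the surviving terms and invoking Lemma~\ref{lem:lambdaT} to set $\lambda_T(\theta)=\omega_T$ under threshold rules delivers precisely the quasi-linear form \eqref{eq:UL-reduced}, with utility linear in the cap parameter $b$ through $\tilde b$.

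The main obstacle I anticipate is the legitimacy of the interchange of expectation and differentiation in Step~2 and in Lemma~\ref{lem:lambdaT}, together with the handling of the cap-binding boundary. On the cap-binding set the marginal-rescue channel shuts down and the clean FOC no longer applies; I would appeal to Assumption~\ref{ass:cap-slack} to argue that this set has $\eta$-measure at most $\varepsilon$, so the reduced-form expressions are $O(\varepsilon)$-close to the exact ones, and to Assumption~\ref{ass:regularity} (continuous noise density $f_\eta$ with bounded tails, continuity of $R'_e$) to justify dominated convergence and the vanishing of boundary terms. The subtlety is that the threshold $\beta$ has jumps, so $\beta'$ exists only a.e.; the argument must show that the atoms of $\beta$ contribute nothing once smoothed by the continuous audit noise $\eta$, which is where Assumption~\ref{ass:regularity} does the real work.

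Finally I would close the equivalence by verifying that the static direct mechanism faithfully represents the original game: any feasible three-stage strategy induces, and is induced by, a report-plus-reduced-form pair achieving the same interim payoff \eqref{eq:UL-reduced}, so truthful implementation in the reduced problem is equivalent to incentive compatibility in the original. The MLRP assumption (Assumption~\ref{ass:mlrp}) ensures $\tilde b(\hat\theta;\theta)$ inherits the monotonicity needed for single-crossing in the reduced screening problem, which I would flag as the structural property that makes the one-dimensional mechanism tractable but would not develop further here, since the proposition asserts only the equivalence, not the subsequent characterization.
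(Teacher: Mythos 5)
Your proposal is correct and follows essentially the same route as the paper's proof: optimize out $(q,I,D)$ to get a report-independent $C_0(\theta)$, use Assumption~\ref{ass:report-invariance}/Lemma~\ref{lem:e-indep} for report-independence of $e^\ast(\theta)$ under threshold $\beta$, invoke Lemma~\ref{lem:lambdaT} to set $\lambda_T\equiv\omega_T$, and absorb the cap-binding and default-indicator contributions as $O(\varepsilon)$ terms via Assumptions~\ref{ass:cap-slack} and~\ref{ass:regularity} (the paper does this in Remark~\ref{rem:defaultloss} and Appendix~\ref{app:regularity-diff}). Your added closing step on the strategy-to-report equivalence and the MLRP/single-crossing flag are consistent elaborations of what the paper leaves implicit.
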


\begin{remark}[Constant marginal utility of bailouts]\label{rem:omega-const}
Lemma~\ref{lem:lambdaT} implies that, under threshold $\beta$, the grant weight equals $\omega_T$. If instead a discretionary linear segment applies (Appendix~\ref{subsec:disc}), the weight falls below $\omega_T$ proportionally to the slope and the probability of being on the linear branch.
\end{remark}

\begin{remark}[Effect of default-loss term and approximation accuracy]\label{rem:defaultloss}
Because the default indicator only flips on the cap-binding set, which has probability at most $\varepsilon$ by Assumption~\ref{ass:cap-slack}, the marginal effect of a report on the $-\varphi\,\mathbf1\{p<G\}$ term is $O(\varepsilon)$ and can be absorbed into $K(\theta)$. On the cap-slack region, the common tail probability cancels in the FOC just as in Lemma~\ref{lem:capcalc}. This formalises the reduction in \eqref{eq:UL-reduced} up to an $O(\varepsilon)$ error.
\end{remark}

\subsection{Single--Period Mechanism Design}
\label{sec:stage_game}
We henceforth work with the reduced form \eqref{eq:UL-reduced}, and with provincial cost taken in \emph{expectation} over the realized payout.

\paragraph{Provincial cost.}
For a cap profile $b(\theta)$ the expected per-type cost is
\[
  \E\Bigl[ \alpha\,\min\{\beta(\hat G),b(\theta)\} + \frac{\kappa}{2}\,\min\{\beta(\hat G),b(\theta)\}^{2} \Bigm|\,\theta\Bigr].
\]
The province minimizes the population expectation of this cost plus $\gamma T(\theta)$, subject to IC, IR and limited liability.

\paragraph{Envelope and weights.}
With $V(\theta)=U_L(\theta,\theta)$ and $V(\underline\theta)=\underline U$,
\begin{equation}\label{eq:envelope-correct}
V'(\theta)=\lambda_T'(\theta)\,T(\theta)+\omega_b\,\partial_\theta\tilde b(\theta;\theta)+K'(\theta).
\end{equation}

\begin{remark}[Baseline]\label{rem:lambdaT-const}
In the baseline with threshold $\beta$ and constant $\omega_b$, we have $\lambda_T(\theta)\equiv\omega_T$, so \eqref{eq:envelope-correct} reduces to $V'(\theta)=\omega_b\,\partial_\theta\tilde b(\theta;\theta)+K'(\theta)$.
\end{remark}

\begin{assumption}[Single crossing in the allocation index]\label{ass:sc}
Define the allocation index for report $\hat\theta$ at true type $\theta$ by
\[
  x(\hat\theta;\theta)\;=\;\lambda_T(\theta)\,T(\hat\theta)\;+\;\omega_b(\theta)\,\tilde b(\hat\theta;\theta),
\]
so that interim utility is $U_L(\hat\theta,\theta)=x(\hat\theta;\theta)+K(\theta)$ with $K$ absolutely continuous.
Assume the single-crossing condition in $x$:
\[
  \frac{\partial^2 U_L}{\partial\theta\,\partial x}(\hat\theta,\theta)\;\ge\;0\quad\text{for all }(\hat\theta,\theta).
\]
Under Assumption~\ref{ass:mlrp}, this holds because $\partial_\theta \tilde b(\hat\theta;\theta)\ge 0$ for nondecreasing $\beta$.
\end{assumption}

\begin{problem}[Leader’s program --- reduced form]\label{prob:leader}
\[
  \min_{T(\cdot),\,b(\cdot)}
    \; \E_\theta\!\Bigl[\,\E\bigl[\alpha\,\min\{\beta(\hat G),b(\theta)\}
    +\tfrac{\kappa}{2}\min\{\beta(\hat G),b(\theta)\}^{2}\bigm|\theta\bigr]
    + \gamma\,T(\theta)\Bigr]
\]
\[
\text{s.t.}\quad
\begin{cases}
  \text{(IC)} & V'(\theta)=\lambda_T'(\theta)\,T(\theta)+\omega_b\,\partial_\theta\tilde b(\theta;\theta)+K'(\theta),\\[0.5ex]
  \text{(IR)} & V(\theta)\ge\underline U,\quad\forall\theta, \\[0.5ex]
  \text{(LL)} & 0\le T(\theta),\; 0\le b(\theta)\le\bar b,\quad\forall\theta .
\end{cases}
\]
\end{problem}

\noindent\emph{Technical detail.} See Lemma~\ref{lem:mono-fixed} in Appendix~\ref{app:proofs}.

\begin{sidewaysfigure}
\centering
\resizebox{\textheight}{!}{%
\begin{tikzpicture}[
  >=Latex,
  node distance = 15mm and 38mm, 
  block/.style={rectangle, rounded corners, draw, align=left, fill=blue!3,
                text width=48mm, minimum height=8mm, inner sep=2mm},
  blockS/.style={block, font=\scriptsize},
  blockXS/.style={block, font=\scriptsize, text width=46mm}, 
  decision/.style={diamond, draw, aspect=2.3, align=center, fill=purple!3,
                   inner sep=1.2pt, text width=20mm},
  annot/.style={font=\scriptsize, align=left},
  every node/.style={font=\footnotesize}
]
\node (t0label) at (0,3.95) {\Large\bfseries $t=0$};
\node (t1label) at (6,3.95) {\Large\bfseries $t=1$};
\node (t2label) at (14.6,3.95) {\Large\bfseries $t=2$};

\node[block] (t0) at (0,2.45) {Province designs (commitment)\\
$T(\cdot),\ b(\cdot),\ \beta(\cdot)$};
\node[block,fill=gray!10] (t0alt) [below=of t0]
{Or: no commitment\\(discretion at $t=2$)};

\node[block] (choices) at (6,2.45)
{Local chooses $(e,q,I,D)$\\ s.t.\ $0\leq D\leq \bar{D}$};
\node[blockS] (gap) [below=of choices]
{$G=C(q,\theta)+(1-s)I+rD$\\
$-\,[R(e,\theta)+\tau+g+sI+D]+\varepsilon$};
\node[block] (signal) [below=of gap]
{$\hat G = G + \eta$};

\node[decision] (commit) at (14.6,2.85) {Commit-\\ment?};

\node[blockXS] (commitpath) [below left=16mm and 8mm of commit]
{Pay $p(\hat G,\hat\theta)$ with\\
$\displaystyle p=\mathbf{1}\{\hat G>0\}\min\{\beta(\hat G),\, b(\hat\theta),\, \hat G\}$};

\node[blockXS, fill=blue!6] (discretion) [below right=16mm and 15mm of commit]
{Choose $x$ to minimize\\
$\alpha x+\frac{\kappa}{2}x^2+\frac{\chi}{2}(\hat G-x)_+^2$\\[2pt]
$\Rightarrow\ \beta_{\mathrm{disc}}(\hat G)=
\Big[\frac{\chi \hat G-\alpha}{\kappa+\chi}\Big]_{[0,\min\{\bar{b},\hat G\}]}$\\
(TLC: threshold--linear--cap)};

\node[decision] (default) [below=26mm of commit] {$p<G$?};
\node[block,fill=red!5] (defaultyes) [below left=13mm and 12mm of default]
{Default / crisis\\ loss $\phi$};
\node[block,fill=green!5] (defaultno) [below right=13mm and 12mm of default]
{No default};

\draw[->,thick] (t0) -- (choices);
\draw[->,thick] (t0alt) to[out=15,in=195] (choices);

\draw[->,thick] (choices) -- (gap);
\draw[->,thick] (gap) -- (signal);

\draw[->,thick] (signal.east) to[out=5,in=220] (commit.west);

\draw[->,thick] (commit.south west) to[out=235,in=35] node[above,sloped]{yes} (commitpath.north east);
\draw[->,thick] (commit.south east) to[out=-55,in=145] node[above,sloped]{no}  (discretion.north west);

\draw[->,thick] (commitpath.south)  to[out=-80,in=180] (default.west);
\draw[->,thick] (discretion.south)  to[out=-100,in=0]  (default.east);

\draw[->,thick] (default) to[out=210,in=20]  (defaultyes);
\draw[->,thick] (default) to[out=-30,in=160] (defaultno);

\node[annot] at (6.0,1) {Matching: province pays $sI$, locality pays $(1-s)I$};
\node[annot,align=left] at (15.85,1.1) {Cap slack if $\beta(\hat G)<b(\hat\theta)$;\\ binding if $\geq$};

\draw[dashed,->] (gap.west) ++(-1.2,0) node[annot,left]{shock $\varepsilon$} -- (gap.west);
\draw[dashed,->] (signal.west) ++(-1.2,0) node[annot,left]{noise $\eta$} -- (signal.west);
\end{tikzpicture}
}
\caption{Timeline and decision flow (commitment vs.\ discretion with TLC).}
\end{sidewaysfigure}
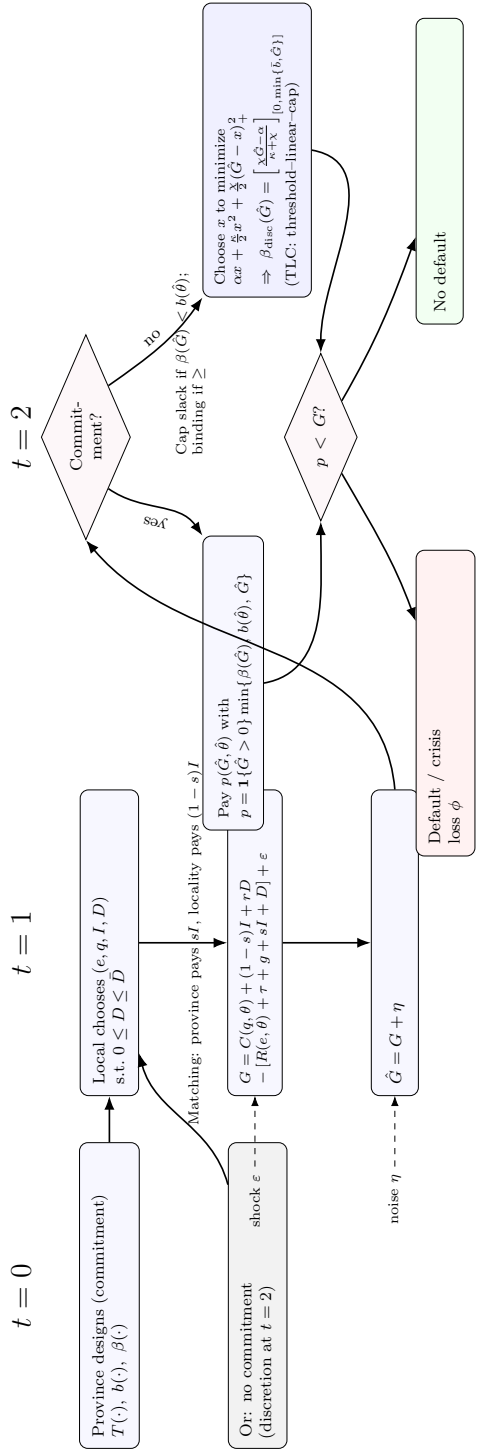

\subsection{Optimal Transfer Schedule}\label{sec:Tstar}

\noindent\emph{Technical detail.} See Lemma~\ref{lem:mono_cap} in Appendix~\ref{app:proofs}.

\begin{lemma}[Conditional cap--min calculus]\label{lem:capcalc}
Let $F_\beta(\cdot\mid\theta)$ be the c.d.f.\ of $\beta(\hat G)$ conditional on type with continuous density. For any cap $b\ge0$,
\[
  \frac{\partial}{\partial b}\E\bigl[\min\{\beta(\hat G),b\}\mid\theta\bigr]
  \;=\; \PP_\theta\!\bigl[\beta(\hat G)\ge b\bigr],\qquad
  \frac{\partial}{\partial b}\E\bigl[\min\{\beta(\hat G),b\}^{2}\mid\theta\bigr]
  \;=\; 2b\,\PP_\theta\!\bigl[\beta(\hat G)\ge b\bigr].
\]
\end{lemma}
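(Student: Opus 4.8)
The plan is to reduce both identities to the fundamental theorem of calculus via the tail (layer-cake) representation of a truncated variable, which handles the first and second moments uniformly and sidesteps any need to differentiate a product under the integral sign. Write $X=\beta(\hat G)\ge 0$ and observe the pathwise identity $\min\{X,b\}=\int_0^b \mathbf 1\{X>t\}\,\dd t$: for $X\ge b$ the integrand is identically one on $[0,b]$ so the integral equals $b$, while for $X<b$ it equals $X$. Since the integrand is nonnegative, Tonelli's theorem lets me move the conditional expectation inside, giving $\E[\min\{X,b\}\mid\theta]=\int_0^b \PP_\theta[X>t]\,\dd t$. The map $t\mapsto \PP_\theta[X>t]=1-F_\beta(t\mid\theta)$ is continuous under the continuous-density hypothesis, so the fundamental theorem of calculus yields $\partial_b\E[\min\{X,b\}\mid\theta]=\PP_\theta[X>b]$, which is the first claim once I identify $\PP_\theta[X>b]=\PP_\theta[X\ge b]$.

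For the quadratic identity I would use the companion representation $y^2=\int_0^y 2s\,\dd s=\int_0^\infty 2s\,\mathbf 1\{y>s\}\,\dd s$ applied to $y=\min\{X,b\}$. Because $\{\min\{X,b\}>s\}=\{X>s\}\cap\{s<b\}$, this collapses to $\min\{X,b\}^2=\int_0^b 2s\,\mathbf 1\{X>s\}\,\dd s$. Taking conditional expectations (again by Tonelli) gives $\E[\min\{X,b\}^2\mid\theta]=\int_0^b 2s\,\PP_\theta[X>s]\,\dd s$, whose integrand is continuous in $s$, so differentiating the upper limit returns $2b\,\PP_\theta[X>b]=2b\,\PP_\theta[X\ge b]$, the second claim.

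The only genuine subtlety---and the step I would be most careful about---is the endpoint identification $\PP_\theta[\beta(\hat G)>b]=\PP_\theta[\beta(\hat G)\ge b]$ together with the validity of the fundamental-theorem step. Both rest squarely on the hypothesis that $\beta(\hat G)$ admits a continuous conditional density: continuity forbids an atom at $b$, so the weak and strict tail events coincide, and it makes the tail function $t\mapsto 1-F_\beta(t\mid\theta)$ continuous, which is exactly what licenses differentiating the accumulated integral at its upper limit. As a cross-check I would also verify the result by the direct route---splitting $\E[\min\{X,b\}\mid\theta]=\int_0^b x\,f_\beta(x\mid\theta)\,\dd x+b\,(1-F_\beta(b\mid\theta))$ and differentiating by Leibniz's rule---where the two boundary contributions $b\,f_\beta(b\mid\theta)$ and $-b\,f_\beta(b\mid\theta)$ cancel, leaving $1-F_\beta(b\mid\theta)$; the same cancellation, with $b^2$ in place of $b$, produces the quadratic identity. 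Agreement of the two derivations confirms that no boundary term has been silently dropped.
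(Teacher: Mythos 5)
Your proof is correct, and it is in substance the same calculation as the paper's. The paper handles the second moment exactly as you do---via the layer-cake identity $\E[\min\{Y,b\}^{2}\mid\theta]=\int_{0}^{b}2t\,\PP_\theta[Y\ge t]\,\dd t$ followed by differentiation at the upper limit---while for the first moment it splits the integral at $b$ and differentiates directly, which is precisely the Leibniz cross-check you describe (the two boundary contributions $b\,f_\beta(b\mid\theta)$ cancelling). The only genuine difference is organizational: you run the tail representation $\min\{X,b\}=\int_{0}^{b}\mathbf 1\{X>t\}\,\dd t$ uniformly through both moments, so the entire lemma reduces to Tonelli plus the fundamental theorem of calculus, with no boundary terms to track, and the continuous-density hypothesis enters only to make $t\mapsto\PP_\theta[X>t]$ continuous at the upper limit and to identify $\PP_\theta[X>b]$ with $\PP_\theta[X\ge b]$. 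That uniformity buys a little extra: the same argument yields $\partial_b\,\E[\min\{X,b\}^{k}\mid\theta]=k\,b^{k-1}\,\PP_\theta[X\ge b]$ for any $k\ge 1$ at no additional cost, and your explicit treatment of the no-atom issue at $b$ is slightly more careful than the paper's, which asserts the weak-inequality form without comment.
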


\begin{proposition}[Closed-form optimal mechanism with statutory cap]\label{prop:opt_cap}
Let $f(\theta)>0$ and IFR on $[\underline\theta,\bar\theta]$, and let the provincial cost be
\(C(x)=\alpha x+\tfrac{\kappa}{2}x^{2}\) in the \emph{realized payout} $x$. In the threshold-$\beta$ baseline where $\lambda_T\equiv\omega_T$, the pointwise minimizer under $0\le b\le\bar b$ is
\begin{align}
  b^{\ast}(\theta)
    &=\min\!\Bigl\{\bar b,\;
        \max\!\Bigl\{0,\;
        \frac{\gamma\,\omega_b}{\kappa\,\omega_T}\,
        \frac{f(\theta)}{\barF(\theta)}
        -\frac{\alpha}{\kappa}
      \Bigr\}\Bigr\},
      \label{eq:b-star}\\[4pt]
  T^{\ast}(\theta)
    &=T^{\ast}(\theta^{\min})
      -\frac{\omega_b}{\omega_T}\,
      \Bigl[\tilde b^{\ast}(\theta)-\tilde b^{\ast}(\theta^{\min})\Bigr],
      \label{eq:T-star}
\end{align}
where $\tilde b^{\ast}(\theta)=\E\bigl[\min\{\beta(\hat G),b^{\ast}(\theta)\}\mid\theta\bigr]$, $\theta^{\min}=\inf\{\theta:\,b^{\ast}(\theta)>0\}$, and
\begin{equation}
  \theta^{\dagger} \;=\; \inf\bigl\{\theta:\, b^{\ast}(\theta)=\bar b \bigr\}.
  \label{eq:theta-dagger}
\end{equation}
\end{proposition}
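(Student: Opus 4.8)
The plan is to solve Problem~\ref{prob:leader} by the standard two–step route for screening problems: replace the (IC) constraint by the envelope condition, reduce the leader's program to an unconstrained pointwise minimization in the cap $b(\theta)$ (a ``virtual cost'' reformulation), recover $T^{\ast}$ from local incentive compatibility, and finally verify that the relaxed optimum satisfies the omitted global constraints. Throughout I work in the threshold-$\beta$ baseline, so $\lambda_T(\theta)\equiv\omega_T$ by Remark~\ref{rem:lambdaT-const} and the envelope \eqref{eq:envelope-correct} reads $V'(\theta)=\omega_b\,\partial_\theta\tilde b(\theta;\theta)+K'(\theta)$. First I would use the direct-utility identity $V(\theta)=\omega_T T(\theta)+\omega_b\tilde b(\theta;\theta)+K(\theta)$ from \eqref{eq:UL-reduced} to write $\gamma T(\theta)=\tfrac{\gamma}{\omega_T}\bigl[V(\theta)-\omega_b\tilde b(\theta;\theta)-K(\theta)\bigr]$ and substitute into the objective. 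The only mechanism-dependent coupling across types is the rent term $\int V\,dF$; integrating it by parts with $V(\underline\theta)=\underline U$ (the binding (IR) at the base type, justified in the last step) and the envelope gives $\int V\,dF=\underline U+\int_{\underline\theta}^{\bar\theta}\bigl[\omega_b\,\partial_\theta\tilde b(\theta;\theta)+K'(\theta)\bigr]\barF(\theta)\,d\theta$, so the survival weight $\barF$ divided by the density $f$ that already multiplies the per-type cost turns the marginal rent into a hazard-weighted term $\haz(\theta)=f(\theta)/\barF(\theta)$, leaving a pointwise objective in $b(\theta)$.

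Next I would minimize the pointwise objective in $b\in[0,\bar b]$. Raising the cap has two opposing effects: it increases the direct bailout cost, with marginal $(\alpha+\kappa b)\,\PP_\theta[\beta(\hat G)\ge b]$ by Lemma~\ref{lem:capcalc} (since $\partial_b\bigl[\alpha\,\E[\min\{\beta(\hat G),b\}\mid\theta]+\tfrac{\kappa}{2}\E[\min\{\beta(\hat G),b\}^{2}\mid\theta]\bigr]=(\alpha+\kappa b)\,\PP_\theta[\beta(\hat G)\ge b]$), but it also raises the municipality's bailout utility $\omega_b\tilde b$ and thereby substitutes for the costlier grant, a saving whose marginal value, after the integration-by-parts rent correction, is $\tfrac{\gamma\omega_b}{\omega_T}\haz(\theta)$ carried against the same sensitivity $\PP_\theta[\beta(\hat G)\ge b]$. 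Setting marginal cost equal to marginal saving, the common tail probability cancels---exactly the cancellation flagged in Remark~\ref{rem:defaultloss}---so the interior first-order condition collapses to $\alpha+\kappa b=\tfrac{\gamma\omega_b}{\omega_T}\haz(\theta)$, i.e.\ $b^{\mathrm{int}}(\theta)=\tfrac{\gamma\omega_b}{\kappa\omega_T}\haz(\theta)-\tfrac{\alpha}{\kappa}$. Projecting onto the box $[0,\bar b]$ delivers \eqref{eq:b-star}, and the entry/saturation thresholds $\theta^{\min}$ and $\theta^{\dagger}$ in \eqref{eq:theta-dagger} are the types at which $b^{\mathrm{int}}$ crosses $0$ and $\bar b$.

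The grant schedule $T^{\ast}$ then follows from local incentive compatibility. The first-order condition for truthful reporting, $\partial_{\hat\theta}U_L(\hat\theta,\theta)\big|_{\hat\theta=\theta}=0$, reads $\omega_T\,T'(\theta)+\omega_b\,\partial_{\hat\theta}\tilde b(\theta;\theta)=0$. Under threshold $\beta$ the report enters $\tilde b$ only through the cap $b(\hat\theta)$, so the report-derivative coincides, up to the $O(\varepsilon)$ terms of Remark~\ref{rem:defaultloss}, with the on-path total derivative of $\tilde b^{\ast}$; integrating $\omega_T\,T'=-\omega_b\,\tfrac{d}{d\theta}\tilde b^{\ast}$ from $\theta^{\min}$ yields \eqref{eq:T-star}, with the level $T^{\ast}(\theta^{\min})$ pinned down by the binding (IR).

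Finally I would check that dropping the monotonicity and cap-binding constraints was innocuous. The IFR assumption makes $\haz$ nondecreasing, so $b^{\mathrm{int}}$ and hence $b^{\ast}$ are nondecreasing; combined with the single-crossing condition of Assumption~\ref{ass:sc} (which holds by MLRP, Assumption~\ref{ass:mlrp}) this delivers global (IC) by the usual ``monotone allocation $\Rightarrow$ implementable'' argument (Lemma~\ref{lem:mono_cap}), so no ironing is needed; monotone rents then confirm that (IR) binds at $\underline\theta$, and the neglected default-loss and cap-binding contributions are $O(\varepsilon)$ by Assumption~\ref{ass:cap-slack}. The main obstacle is the coupling in the first two steps: because the type enters through the signal \emph{distribution} (MLRP) rather than a valuation coefficient, the rent term genuinely involves the cross-derivative $\partial_\theta\tilde b$, and the delicate point is to show that, after integration by parts, the marginal rent really does appear as $\tfrac{\gamma\omega_b}{\omega_T}\haz(\theta)\,\PP_\theta[\beta(\hat G)\ge b]$ with the \emph{same} tail factor as the marginal cost. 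That shared proportionality is precisely what makes the first-order condition reduce to the clean linear relation $\alpha+\kappa b=\tfrac{\gamma\omega_b}{\omega_T}\haz(\theta)$, and establishing it rigorously---rather than only to leading order in $\varepsilon$ via Assumption~\ref{ass:cap-slack} and Remark~\ref{rem:defaultloss}---is where the real work lies.
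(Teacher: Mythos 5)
Your derivation of the cap formula \eqref{eq:b-star} follows the paper's proof essentially step for step: the pointwise marginal cost $(\alpha+\kappa b)\,\PP_\theta[\beta(\hat G)\ge b]$ from Lemma~\ref{lem:capcalc}, a hazard-weighted virtual benefit carrying the same tail factor, cancellation of that factor, projection onto $[0,\bar b]$, and IFR for monotonicity. Your integration-by-parts construction of the virtual surplus merely spells out what the paper compresses into ``by Myerson's lemma,'' and your closing caveat---that the proportionality of the rent term to $\haz(\theta)\,\PP_\theta[\beta\ge b]$ is the real unproven work---applies equally to the paper's own proof, so it is not a gap relative to the paper. The genuine gap is in your derivation of \eqref{eq:T-star}.

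You obtain $T^\ast$ from the report first-order condition $\omega_T T'(\theta)+\omega_b\,\partial_{\hat\theta}\tilde b(\hat\theta;\theta)\big|_{\hat\theta=\theta}=0$ and then claim that, because the report enters $\tilde b$ only through the cap, the report-derivative equals the on-path total derivative $\tfrac{\dd}{\dd\theta}\tilde b^{\ast}(\theta)$ up to $O(\varepsilon)$. That claim is backwards. By Lemma~\ref{lem:capcalc} and the chain rule, $\partial_{\hat\theta}\tilde b(\hat\theta;\theta)=b'(\hat\theta)\,\PP_\theta[\beta(\hat G)\ge b(\hat\theta)]$, which is precisely the $O(\varepsilon)$ object under Assumption~\ref{ass:cap-slack}; the total derivative additionally contains the distribution-shift term $\partial_\theta\tilde b(\theta;\theta)$, which is the envelope rent term in \eqref{eq:envelope-correct} and is nonnegative and generically $O(1)$ under MLRP. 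So the two derivatives differ by an $O(1)$ amount, and your FOC, applied consistently, delivers $T'(\theta)=-(\omega_b/\omega_T)\,b'(\theta)\,\PP_\theta[\beta\ge b^\ast(\theta)]=O(\varepsilon)$---a nearly flat grant schedule---rather than the decline in \eqref{eq:T-star}. Worse, your Step-3 premise (equivalent to $\partial_\theta\tilde b=O(\varepsilon)$) contradicts your Steps 1--2, where a non-negligible $\partial_\theta\tilde b$ is exactly what survives the integration by parts and generates the hazard term $\tfrac{\gamma\omega_b}{\omega_T}\haz(\theta)$ in \eqref{eq:b-star}; if that term were $O(\varepsilon)$, the virtual correction would vanish and the interior cap would be constant in $\theta$. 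The paper reaches \eqref{eq:T-star} by a different device: it holds the implemented index $x(\theta)=\omega_T T(\theta)+\omega_b\tilde b(\theta;\theta)$ at its feasible level, reads off the total-differential relation $\omega_T\,\dd T=-\omega_b\,\dd\tilde b$ a.e., and integrates from $\theta^{\min}$ with the IR normalization $T^\ast(\theta^{\min})=0$. Whatever the merits of that step, it produces the stated formula in terms of the on-path $\tilde b^{\ast}$ without ever invoking the equivalence you need; your route, as written, cannot.
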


\begin{proposition}[No-bailout knife-edge]\label{lem:nobailout}
Under Proposition~\ref{prop:opt_cap}'s conditions, a self-consistent
no bailout regime ($b^{\ast}(\theta)\equiv0$ for all $\theta$)
obtains iff
\[
  \boxed{\;\alpha\omega_T\;\ge\;\gamma\omega_b\,
          \sup_{\theta\in[\underline\theta,\bar\theta]}\haz(\theta)\;},
  \qquad \haz(\theta)=f(\theta)/\bar F(\theta).
\]
Otherwise the optimal cap is strictly positive on a set of positive measure.
\end{proposition}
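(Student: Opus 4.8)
The plan is to read the result directly off the closed form for $b^{\ast}$ in Proposition~\ref{prop:opt_cap}, since the entire content is to determine when the pointwise minimizer collapses to zero for every type. First I would recall that in the threshold-$\beta$ baseline ($\lambda_T\equiv\omega_T$),
\[
  b^{\ast}(\theta)=\min\Bigl\{\bar b,\;\max\Bigl\{0,\;\tfrac{\gamma\omega_b}{\kappa\omega_T}\,\haz(\theta)-\tfrac{\alpha}{\kappa}\Bigr\}\Bigr\},
\]
and observe that, because $\bar b\ge 0$, the outer $\min$ leaves the value unchanged whenever the inner $\max$ returns $0$. Hence $b^{\ast}(\theta)=0$ if and only if the inner argument is nonpositive, i.e.
\[
  \tfrac{\gamma\omega_b}{\kappa\omega_T}\,\haz(\theta)-\tfrac{\alpha}{\kappa}\le 0.
\]

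Next I would clear the common factor $\kappa>0$ and cross-multiply by $\omega_T>0$ to rewrite this pointwise inequality as $\gamma\omega_b\,\haz(\theta)\le\alpha\omega_T$. The no-bailout regime $b^{\ast}(\theta)\equiv 0$ therefore holds exactly when this inequality holds for every $\theta\in[\underline\theta,\bar\theta]$, which---by taking the supremum of the left-hand side over $\theta$---is equivalent to the boxed condition $\alpha\omega_T\ge\gamma\omega_b\,\sup_\theta\haz(\theta)$; this establishes both directions of the ``iff'' simultaneously, and the weak inequality is the correct one whether or not the supremum is attained. For the ``otherwise'' clause I would argue by contraposition: if $\alpha\omega_T<\gamma\omega_b\,\sup_\theta\haz(\theta)$, then there is some $\theta_0$ with $\gamma\omega_b\,\haz(\theta_0)>\alpha\omega_T$, and since $f$ is continuous and strictly positive and $\barF$ is continuous, $\haz$ is continuous on the interior, so the strict inequality persists on an open neighborhood of $\theta_0$---a set of positive Lebesgue measure on which (with $\bar b>0$) we have $b^{\ast}>0$.

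The one point requiring care is the behaviour of the supremum at the right endpoint. Under the IFR hypothesis of Proposition~\ref{prop:opt_cap}, $\haz$ is nondecreasing, so $\sup_\theta\haz(\theta)=\lim_{\theta\uparrow\bar\theta}\haz(\theta)$; for a bounded support with $f(\bar\theta)>0$ this limit is typically $+\infty$ (e.g.\ the uniform case), in which case the boxed inequality fails and the proposition correctly predicts a strictly positive cap near $\bar\theta$. I would therefore state the result under the convention that the supremum may equal $+\infty$, so that the equivalence and the ``otherwise'' clause remain valid without assuming $\haz$ bounded; boundedness is needed only for the no-bailout regime to be feasible at all, not for the logic. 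I would also record the sense in which the regime is \emph{self-consistent}: the formula~\eqref{eq:b-star} was derived where $\lambda_T\equiv\omega_T$ (Lemma~\ref{lem:lambdaT}), and $b^{\ast}\equiv 0$ entails zero realized payout, so the constant-weight assumption underlying the derivation is not contradicted by its own conclusion. The main ``obstacle'' here is thus not analytical depth but making the endpoint convention explicit; everything else is a one-line rearrangement of the envelope-based minimizer.
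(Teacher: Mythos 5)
Your proof is correct, and it takes a slightly different (and in one place tighter) route than the paper's. The paper does not invoke the closed form \eqref{eq:b-star} at all: it re-derives the knife-edge from first principles by comparing, at the candidate $b\equiv 0$, the marginal expected cost $\alpha$ of relaxing the cap (via Lemma~\ref{lem:capcalc}) with the Myerson virtual marginal benefit $(\gamma\omega_b/\omega_T)\,\haz(\theta)$, and then argues that a violated KKT inequality at an argmax of the hazard makes $b\equiv 0$ suboptimal. Your version instead treats the proposition as a corollary of \eqref{eq:b-star}, which is legitimate since the statement is explicitly ``under Proposition~\ref{prop:opt_cap}'s conditions''; the two arguments encode the same pointwise KKT logic. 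Two places where your write-up adds value: (i) your ``otherwise'' clause actually delivers what the proposition asserts --- a set of \emph{positive measure} with $b^{\ast}>0$, obtained from continuity (or, even more simply, from IFR monotonicity: once $\haz(\theta_0)$ strictly exceeds $\alpha\omega_T/(\gamma\omega_b)$, so does $\haz(\theta)$ for all $\theta\ge\theta_0$) --- whereas the paper's perturbation at a single point $\theta^{\sharp}\in\argmax\haz(\theta)$ is loose, since the argmax need not be attained and changing $b$ on a null set does not change the integral objective. (ii) Your endpoint caveat is a genuine observation the paper omits: with $f>0$ on a compact support, $\barF(\theta)\to 0$ forces $\sup_\theta\haz(\theta)=+\infty$ in typical cases (e.g.\ uniform), so the boxed inequality can essentially never hold and the no-bailout regime is vacuous unless one allows the hazard to remain bounded (e.g.\ $f(\bar\theta)=0$ fast enough); stating the result with the convention $\sup\in(0,+\infty]$ keeps the equivalence valid. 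Do note, as you did, that both directions of the ``iff'' implicitly require $\bar b>0$, since $\bar b=0$ would make $b^{\ast}\equiv 0$ hold trivially and break the ``only if'' direction.
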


\paragraph{IR normalization and LL implications.}
Normalize $V(\theta^{\min})=\underline U$ and note $\tilde b^{\ast}(\theta^{\min})=0$, whence \eqref{eq:T-star} gives $T^{\ast}(\theta^{\min})=0$. Because of the negative relation in \eqref{eq:T-star}, the limited-liability requirement $T\ge0$ implies that whenever $\tilde b^{\ast}(\theta)>0$ on some region, the optimal $T^{\ast}(\theta)$ is driven to the boundary $T=0$ there, shifting screening to $b(\cdot)$. Only when $\gamma$ is sufficiently small relative to $(\alpha,\kappa)$ can interior regions with $T>0$ arise.

\subsection{Comparative statics}\label{subsec:CS2}

Let $\haz(\theta)=f(\theta)/\bar F(\theta)$ and $\lambda_T=\omega_T$ in the baseline. The interior zero solves $h(\theta^{\min})=\alpha\,\lambda_T/(\gamma\,\omega_b)$. By the implicit function theorem,
\begin{align*}
  \frac{\partial\theta^{\min}}{\partial\alpha}
      &=\frac{1}{h'(\theta^{\min})}\,\frac{\lambda_T}{\gamma\,\omega_b}>0,\qquad
  \frac{\partial\theta^{\min}}{\partial\omega_b}
      =\frac{1}{h'(\theta^{\min})}\,\Bigl(-\frac{\alpha\,\lambda_T}{\gamma\,\omega_b^{2}}\Bigr)<0,\\[6pt]
  \frac{\partial\theta^{\min}}{\partial\lambda_T}
      &=\frac{1}{h'(\theta^{\min})}\,\frac{\alpha}{\gamma\,\omega_b}>0,\qquad
  \frac{\partial\theta^{\min}}{\partial\gamma}
      =\frac{1}{h'(\theta^{\min})}\,\Bigl(-\frac{\alpha\,\lambda_T}{\gamma^{2}\,\omega_b}\Bigr)<0.
\end{align*}

For the interior cap $b_{\max}=(\gamma\omega_b/\lambda_T-\alpha)/\kappa$,
\[
  \frac{\partial b_{\max}}{\partial\kappa}
      =-\,\frac{1}{\kappa^{2}}
         \Bigl(\frac{\gamma\omega_b}{\lambda_T}-\alpha\Bigr)<0,\quad
  \frac{\partial b_{\max}}{\partial\lambda_T}
      =-\,\frac{\gamma\omega_b}{\kappa\lambda_T^{2}}<0,\quad
  \frac{\partial b_{\max}}{\partial\gamma}
      =\frac{\omega_b}{\kappa\lambda_T}>0.
\]

\subsection{Welfare Property}\label{subsec:CS2b}

\begin{proposition}
[Second--best efficiency]\label{prop:welfare}
Under Assumptions \ref{ass:primitives}--\ref{ass:sc},  
the mechanism in Proposition~\ref{prop:opt_cap}
maximizes the expected sum of provincial and municipal utilities
among all IC--IR--LL allocations.
\end{proposition}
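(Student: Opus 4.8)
The plan is to recast ``second-best efficiency'' as constrained Pareto optimality on the IC--IR--LL feasible set and to show that the cost-minimizer of Problem~\ref{prob:leader} coincides with the maximizer of the utilitarian sum once information rents are substituted out. Write the utilitarian objective as
\[
  W \;=\; \E_\theta[V(\theta)] \;-\; \E_\theta\Bigl[\E\bigl[\alpha\min\{\beta(\hat G),b(\theta)\}+\tfrac{\kappa}{2}\min\{\beta(\hat G),b(\theta)\}^{2}\bigm|\theta\bigr]+\gamma\,T(\theta)\Bigr],
\]
the aggregate municipal interim utility minus the provincial cost in Problem~\ref{prob:leader}. The first step is to eliminate $T$ and the rent profile using incentive compatibility: differentiating $V=\omega_T T+\omega_b\tilde b+K$ and matching it to the envelope identity~\eqref{eq:envelope-correct} (with $\lambda_T\equiv\omega_T$ by Remark~\ref{rem:lambdaT-const}) yields $\omega_T\,T'(\theta)=-\omega_b\,\partial_{\hat\theta}\tilde b(\theta;\theta)$, so that along any IC allocation $T(\cdot)$ is pinned down by $b(\cdot)$ up to the constant $T(\theta^{\min})$. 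The feasible set is thus effectively indexed by the cap profile $b(\cdot)$ and the scalar bottom rent, and both $W$ and the provincial cost become functionals of these objects alone.

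The second step is an integration-by-parts (Myerson) reduction of $\E_\theta[V(\theta)]$. With IR binding at the bottom, $V(\theta)=\underline U+\int_{\theta^{\min}}^{\theta}V'(s)\,\dd s$, and Fubini gives $\E_\theta[V(\theta)]=\underline U+\int V'(s)\,\barF(s)\,\dd s$ with $V'(s)=\omega_b\,\partial_\theta\tilde b(s;s)+K'(s)$. Substituting this and the IC-implied $T(\cdot)$ into $W$ collapses the utilitarian objective to a single integral of a pointwise virtual-surplus density in $b(\theta)$, whose kernel---after using Lemma~\ref{lem:capcalc} for $\partial_b\E[\min\{\beta,b\}]$ and $\partial_b\E[\min\{\beta,b\}^{2}]$---matches term for term the virtual-cost kernel whose pointwise minimizer is exactly~\eqref{eq:b-star}. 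The hazard factor $f/\barF$ appearing in $b^{\ast}$ is precisely the marginal information-rent weight produced by the integration by parts, so the stationarity condition for $W$ and the stationarity condition for the provincial program coincide wherever the cap is interior.

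The third step is to handle the constraints and the threshold regions so that the common pointwise optimum is the global constrained optimum. Single crossing (Assumption~\ref{ass:sc}) together with MLRP (Assumption~\ref{ass:mlrp}) guarantees that the pointwise solution is monotone, hence implementable, so the relaxed (pointwise) program and the full IC program share a solution---the usual ``verify monotonicity ex post'' step. Limited liability enters through the truncation $0\le b\le\bar b$ already built into~\eqref{eq:b-star} and through $T\ge0$, which binds exactly where $\tilde b^{\ast}>0$ and pins $T^{\ast}(\theta^{\min})=0$; on $[\theta^{\min},\theta^{\dagger}]$ the interior branch applies and above $\theta^{\dagger}$ the cap branch applies, and one checks the virtual-surplus density is maximized at these same truncated values. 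Finally, the IFR hypothesis of Proposition~\ref{prop:opt_cap} makes the density single-peaked in $b$, so first-order conditions are sufficient and no IC--IR--LL allocation attains strictly higher $W$.

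The main obstacle I anticipate is reconciling the provincial cost objective with the utilitarian sum along the \emph{rent margin}: the two programs agree on the shape of $b(\cdot)$, but a uniform increase in rents (a larger $T(\theta^{\min})$, still IC--IR--LL feasible) changes $W$ by $(\omega_T-\gamma)$ per capita, so the claim that the cost-minimizer also maximizes $W$ requires that raising rents is not socially profitable, i.e.\ $\gamma\ge\omega_T$, consistent with the high-debt-service calibration in Section~\ref{sec:background}. I would make this hypothesis explicit (or, absent it, restate the conclusion as Pareto efficiency on the IC--IR--LL frontier), since it is exactly what forces IR to bind at the bottom and selects $T^{\ast}(\theta^{\min})=0$; the remaining verification---monotonicity and single-peakedness of the virtual-surplus density across the two active branches---is routine given IFR and Assumptions~\ref{ass:mlrp} and~\ref{ass:sc}.
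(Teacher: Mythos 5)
Your proposal follows essentially the same route as the paper's proof: write the utilitarian objective as $W=\E_\theta[V(\theta)]-\E_\theta\bigl[\gamma T(\theta)+\E\{\alpha p+\tfrac{\kappa}{2}p^{2}\mid\theta\}\bigr]$, use the envelope identity \eqref{eq:envelope-correct} with $\lambda_T\equiv\omega_T$ (Remark~\ref{rem:lambdaT-const}) and IR binding at the bottom, integrate by parts to obtain a virtual-surplus functional, observe that its pointwise condition in $b(\theta)$ is the same KKT condition as in Proposition~\ref{prop:opt_cap}, and invoke single crossing and IFR for implementability and sufficiency of first-order conditions. The paper compresses your first three steps into three sentences, but the decomposition, the Myerson reduction, and the kernel-matching argument are the same.

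The substantive difference is your final paragraph, and you are right: it exposes a genuine gap in the paper's own argument, not in yours. The paper's proof only verifies that the $b(\cdot)$-margins of the cost-minimization program and the welfare program coincide; it never examines the rent level. A uniform shift $T(\cdot)\mapsto T(\cdot)+\Delta$ with $\Delta>0$ preserves IC (it adds the constant $\omega_T\Delta$ to every type's interim utility), preserves IR and LL, and changes $W$ by $(\omega_T-\gamma)\Delta$ per capita. Hence if $\omega_T>\gamma$ the utilitarian objective is unbounded above on the IC--IR--LL set, no maximizer exists, and the mechanism of Proposition~\ref{prop:opt_cap} (which sets $T^{\ast}(\theta^{\min})=0$) cannot be second-best efficient in the stated sense. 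The proposition is true only under the additional hypothesis $\gamma\ge\omega_T$, which forces the welfare maximizer to suppress rents exactly as the province does; nothing in Assumptions~\ref{ass:primitives}--\ref{ass:sc}, nor in the background calibration (which gives $\omega_T>\omega_b$ and ``high $\gamma$'' but never ranks $\gamma$ against $\omega_T$), delivers this inequality. Your proposed remedy --- state $\gamma\ge\omega_T$ explicitly, or weaken the conclusion to constrained Pareto efficiency on the IC--IR--LL frontier --- is precisely what the paper's statement and proof are missing.
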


\section{Policy Implications}\label{sec:policy}

Even though the present paper is purely theoretical, its closed--form solution
delivers lessons that speak directly to provincial practice in
NL and other fiscally stretched jurisdictions.

\subsection{Design principles}

\begin{enumerate}[label=\textbf{P\arabic*},leftmargin=*]

\item \textbf{Codify a \emph{triple-zone} rule.}  
      Proposition~\ref{prop:opt_cap} together with \eqref{eq:theta-dagger}
implies a simple menu in the baseline: \textit{(i)} no transfer when the reported type is below
      $\theta^{\min}$;  
      \textit{(ii)} a flat-to-rising cap on $[\theta^{\min},\,\theta^{\dagger}]$ following \eqref{eq:b-star};  
      \textit{(iii)} a flat cap once type exceeds $\theta^{\dagger}$.  Under a discretionary $t{=}2$ rule (Appendix~\ref{subsec:disc}), the realized payout becomes \emph{threshold--linear--cap} in $\hat G$.

\item \textbf{A single inequality decides whether bailouts survive.}  
      Under the no bailout candidate, Proposition~\ref{lem:nobailout} shows that bailouts disappear when
      $\displaystyle \alpha\omega_T \ge \gamma\omega_b \cdot \sup_{\theta} \haz(\theta)$.

\item \textbf{Front-load under softness (discretion).}  
      When the $t{=}2$ payout rule has an interior linear segment (Appendix~\ref{subsec:disc}), the effective grant weight falls below $\omega_T$ proportionally to the slope and the probability of being on that linear branch (Lemma~\ref{lem:lambdaT}). Softer rescue (steeper or more likely linear branch) thus calls for more front-loaded $T$; cheaper provincial financing ($\gamma\downarrow$) pushes in the opposite direction.

\item \textbf{Make the cap bite by increasing \(\kappa\).}  
      The optimal cap $b_{\max}=(\gamma\omega_b/\lambda_T-\alpha)/\kappa$ is inversely proportional to $\kappa$.
\end{enumerate}

\subsection{Political-economy robustness}

\begin{enumerate}[label=\textbf{E\arabic*},leftmargin=*]

\item \textbf{Credibility of caps.}  
      Without a hard legal ceiling, expectations rise, the discretion-based linear branch becomes more likely, the effective $\lambda_T$ falls, and effort weakens.

\item \textbf{Information precision.}
      Better accounting (lower $\sigma_\eta$) reduces the measure of states on which the cap is slack at high signals and reinforces the effectiveness of the cap.

\item \textbf{Vertical externalities}.  
      The knife-edge in Proposition~\ref{lem:nobailout} offers a quantitative stress-test benchmark.
\end{enumerate}

\subsection{Limited-liability regions}
\noindent From \eqref{eq:T-star}, $T^{\ast}(\theta)$ is (weakly) decreasing in $\tilde b^{\ast}(\theta)$.
Hence on any type region where $\tilde b^{\ast}(\theta)>0$, the grant LL constraint ($T\ge0$) typically binds, pushing $T^{\ast}(\theta)$ to $0$ and shifting screening to $b(\cdot)$.
Consequently, interior $T^{\ast}>0$ arises only (i) on the no bailout region where $b^{\ast}=0$ (i.e.\ $\theta<\theta^{\min}$), or (ii) on ironed segments of the virtual weight when ironing is required under IFR.

\medskip
\noindent When $\gamma$ falls, $b^{\ast}$ rises pointwise on its interior segment and the region with $T^{\ast}=0$ expands, unless $\theta^{\min}$ shifts left enough to increase the no bailout mass.

\subsection{Limitations and extensions}

\begin{enumerate}[label=\textbf{L\arabic*},leftmargin=*]
  \item \textbf{Static benchmark.}  
        A multi-period version with learning would allow reputation-based caps
        and dynamic debt paths.
  \item \textbf{Heterogeneous \(\omega_b\).}  
        Allowing \(\omega_b(\theta)\) to vary with political or administrative
        quality could sharpen incentives (Appendix~\ref{app:omega-variable}).
  \item \textbf{Federal--provincial layer}.  
        Embedding one more tier would capture Ottawa’s equalization backstop
        and test whether the knife-edge extends upward in the federation.
\end{enumerate}

\subsection{Implementation checklist and robustness}
\label{subsec:checklist}
\paragraph{A. Minimal implementation checklist.}
\begin{enumerate}[leftmargin=*]
\item \textbf{Fix primitives:} pick a parametric family for $F(\theta)$ with IFR (e.g.\ log-logistic/exponential tail) and specify $(\alpha,\kappa,\gamma,\omega_T,\omega_b,\bar b)$ from budget documents or ranges.
\item \textbf{Compute hazard:} $\haz(\theta)=f(\theta)/\barF(\theta)$ and locate $\theta^{\min}$ from $h(\theta^{\min})=\alpha\,\omega_T/(\gamma\,\omega_b)$; set $\theta^{\dagger}$ by $b^\ast(\theta)=\bar b$ as in~\eqref{eq:theta-dagger}.
\item \textbf{Construct the menu:} $b^\ast(\theta)$ by \eqref{eq:b-star} and $T^\ast(\theta)$ by \eqref{eq:T-star}, then codify the triple-zone rule (no transfer / rising cap / flat cap).
\item \textbf{Audit rule:} if $t{=}2$ discretion applies, use Appendix~\ref{subsec:disc} to parameterize the linear slope and adjust the effective grant weight $\lambda_T$ accordingly (cf.\ Lemma~\ref{lem:lambdaT}).
\end{enumerate}

\paragraph{B. Robustness flags.}
The design principles rely on three modeling choices:
(i) IFR for $F$ (needed only for monotonicity/ironing);
(ii) a nondecreasing, threshold-like $\beta$ (administratively common);
(iii) continuous signal noise (so boundary terms vanish and $\lambda_T\equiv\omega_T$ under threshold $\beta$).
Under a discretionary linear branch, $\lambda_T$ is scaled down by the branch slope times its probability; see Lemma~\ref{lem:lambdaT} and Appendix~\ref{subsec:disc}.

\section{Conclusion}\label{sec:conclusion}
This paper recasts provincial--municipal rescues as a two-instrument screening problem.
Four takeaways emerge:
\begin{enumerate}[leftmargin=*]
\item \textbf{One-dimensional reduction.} Under convexity and MLRP, the three-stage environment folds into a single-index screen in $(T,b)$ (Proposition~\ref{prop:reduction}).
\item \textbf{Closed-form cap rule.} With linear--quadratic costs and a statutory cap, the IC--IR--LL optimum is \emph{threshold--cap}, with cutoffs $(\theta^{\min},\theta^{\dagger})$ pinned down by the hazard $\haz(\theta)$ (Proposition~\ref{prop:opt_cap}).
\item \textbf{Unified regime test.} A self-consistent no bailout regime obtains iff $\alpha\omega_T \ge \gamma\omega_b\cdot \sup_\theta \haz(\theta)$, providing a single marginal-cost benchmark (Proposition~\ref{lem:nobailout}).
\item \textbf{Discretion vs.\ commitment.} Without commitment at $t{=}2$, the realized rule becomes \emph{threshold--linear--cap}; the interior slope lowers the effective grant weight and strengthens soft budget incentives (Appendix~\ref{subsec:disc} and Lemma~\ref{lem:lambdaT}).

\end{enumerate}
\noindent\textit{Limitations and next steps.} (i) Our baseline adopts IFR and threshold-like $\beta$; when $\haz(\theta)$ is nonmonotone, standard ironing applies and preserves implementability. (ii) A stylized calibration---even with public ranges for $(\alpha,\kappa,\gamma,\omega_T,\omega_b,\bar b)$---would illustrate the triple-zone geometry and aid policy communication. (iii) Extending to a two-dimensional type or audit manipulation cost is feasible and would test the scope of the hazard-based cutoffs.

\newpage
\setcounter{table}{0}
\begin{appendices}

\section{Symbols used in the model and empirical discussion}\label{tab:symbols}
\begin{table}[H]
\centering
\small
\caption{Symbols used in the model and empirical discussion.}
\label{tab:notation}
\begin{tabular}{@{}lp{9.2cm}@{}}
\toprule
\textbf{Symbol} & \textbf{Description} \\
\midrule
$i$ & Local jurisdiction index. \\
$\theta$ & Fiscal need / gap type (higher $=$ weaker tax base, larger need). \\
$f(\theta),F(\theta)$ & Density and c.d.f.\ of types on $[\underline\theta,\bar\theta]$. \\
$\barF(\theta)$ & Survivor function $1-F(\theta)$. \\
$e$ & Local revenue effort. \\
$R(e,\theta)$ & Own-source revenue function. \\
$q$ & Level of basic services. \\
$C(q,\theta)$ & Cost to produce $q$ given $\theta$. \\
$\tau$ & Unconditional operating grant (MOG); maps to $T$ in the model. \\
$g$ & Predictable capital transfer (CCBF). \\
$s$ & Provincial cost-share rate for capital $I$. \\
$I$ & Local capital investment. \\
$D$ & New debt (subject to approval); $\bar D$ debt limit. \\
$r$ & Debt-service factor on $D$. \\
$G$ & Ex post fiscal gap before payout. \\
$\beta(\hat G)$ & Signal-based payout rule at $t{=}2$. \\
$b(\theta)$ & Type-based cap at $t{=}0$. \\
$p(\hat G,\hat\theta)$ & Realized payout $\mathbf 1\{\hat G>0\}\min\{\beta(\hat G),b(\theta),\hat G\}$. \\
$T(\theta)$ & Ex-ante grant schedule. \\
$\tilde b(\hat\theta;\theta)$ & Expected payout under cap: $\E[\min\{\beta(\hat G),b(\hat\theta)\}\mid\theta]$. \\
$\omega_T,\,\omega_b$ & Marginal utilities of $T$ and realized payout. \\
$\alpha,\kappa$ & Payout cost parameters: $C(x)=\alpha x+\frac{\kappa}{2}x^{2}$. \\
$\lambda_T(\theta)$ & Effective weight on $T$ in screening: $\omega_T$ (threshold baseline). \\
$U_L$ & Local government utility (interim). \\
$U_P$ & Province’s (negative) expected cost. \\
$V(\theta)$ & Truthful utility $U_L(\theta,\theta)$. \\
$\underline U$ & Reservation utility (IR constraint). \\
$\theta^{\min},\theta^{\dagger}$ & Lower/upper cutoffs for the optimal cap. \\
$b_{\max}$ & Interior bailout cap level. \\
$\rho$ & Discount factor in the repeated game. \\
$\rho^{\star}$ & Critical discount factor. \\
$\pi$ & Ex-ante default-coverage probability (descriptive). \\
$\delta$ & Ex-ante default probability. \\
$\phi$ & Welfare loss to residents under unresolved gap. \\
$\bar b$ & Statutory cap on per-period bailout. \\
$\eta$ & Audit/report noise. \\
$B(q),\;\Gamma(I)$ & Utility/benefit from service level $q$ and investment $I$. \\
$\chi$ & Convex loss parameter in discretionary rescue (Appendix~\ref{subsec:disc}). \\
\bottomrule
\end{tabular}
\end{table}
\newpage

\section{Discretionary rescue at $t=2$ and backward induction}\label{subsec:disc}

\paragraph{Provincial problem at $t=2$ (no commitment).}
Suppose the Province cannot commit to $\beta$ at $t=0$ and instead chooses 
a payout $x$ at $t=2$ after observing the noisy gap signal $\hat G$. 
For tractability, let the loss from an unresolved residual gap $(\hat G-x)_{+}$ be convex:
\[
  L\bigl((\hat G-x)_{+}\bigr) \;=\; \frac{\chi}{2}\,(\hat G-x)_{+}^{2}, 
  \qquad \chi>0.
\]
The Province solves
\[
  \min_{\,0\le x \le \min\{\bar b,\,\hat G\}}\;
      \alpha x \;+\; \frac{\kappa}{2}x^{2}
      \;+\; L\bigl((\hat G-x)_{+}\bigr).
\]
When $0<x<\min\{\bar b,\,\hat G\}$ the FOC is
$\alpha+\kappa x - \chi(\hat G-x) = 0$, hence
\[
  \beta^{\text{disc}}(\hat G)
   \;=\; \left[\, \frac{\chi\,\hat G - \alpha}{\kappa+\chi} \,\right]_{[\,0,\,\bar b\,]}.
\]
Therefore $\beta^{\text{disc}}$ is \emph{threshold--linear--cap} in $\hat G$.

\paragraph{Backward induction to $t=1$.}
Municipalities at $t=1$ anticipate $\beta^{\text{disc}}(\cdot)$ and choose effort accordingly.
On the interior linear branch where $\beta^{\text{disc}}{}'(\hat G)=\chi/(\kappa+\chi)$, 
the default-probability component in \eqref{eq:FOC_e_correct} is scaled by $\kappa/(\kappa+\chi)$
and there is an additional marginal-rescue term $-\omega_b\,\E[\beta^{\text{disc}}{}'(\hat G)\mathbf1\{\beta^{\text{disc}}<b\}]$.

\paragraph{Discussion.}
This discretionary benchmark microfounds a threshold--linear--cap rule at $t=2$ and shows how the slope filters into 
\eqref{eq:FOC_e_correct}, strengthening the soft budget moral-hazard channel. 
Our commitment baseline avoids time inconsistency by fixing $\beta$ at $t=0$; 
the discretion variant is useful as a robustness check.

\section{Variable marginal utility of bailouts}\label{app:omega-variable}

When the marginal utility of a realized bailout, $\omega_b(\theta)$, varies
across jurisdictions—for instance because political pressure is stronger for
small communities—the first-order condition for the optimal cap becomes
\[
  \omega_b(\theta)\;=\;\alpha\;+\;\kappa\,b^{\ast}(\theta),
\]
so that the linear segment in~\eqref{eq:b-star} reads
\[
  b^{\ast}(\theta)
  \;=\;\frac{\omega_b(\theta)-\alpha}{\kappa},
  \qquad
  0\;\le\;b^{\ast}(\theta)\;\le\;\bar b.
\]
\textit{Implication.} As long as $\omega_b(\theta)$ is weakly increasing in
$\theta$, the cap schedule remains monotone and retains the
\emph{threshold–linear–cap} geometry under the discretionary benchmark.  The slope may now vary with type; for
empirical calibration one needs an estimate of $\omega_b(\theta)$, e.g.\ from
survey weights or past voting patterns.

\section{Single-crossing and monotonicity details}\label{app:sc-proof}
With $U_L(\hat\theta,\theta)=x(\hat\theta;\theta)+K(\theta)$ and $\partial^2 U_L/\partial\theta\,\partial x\ge0$,
the Spence--Mirrlees single-crossing property implies standard IC inequalities:
for any $\theta>\hat\theta$,
\[
  \bigl[U_L(\theta,\theta)-U_L(\hat\theta,\theta)\bigr]
  \;\ge\;
  \bigl[U_L(\theta,\hat\theta)-U_L(\hat\theta,\hat\theta)\bigr].
\]
Since $K$ cancels, this reduces to
$x(\theta;\theta)-x(\hat\theta;\theta)\ge x(\theta;\hat\theta)-x(\hat\theta;\hat\theta)$.
By letting reports be truthful on the RHS, we get $x(\theta)\ge x(\hat\theta)$, hence monotonicity.
When the virtual term $\bigl(\gamma\omega_b/\lambda_T(\theta)\bigr)\haz(\theta)$ fails to be increasing,
standard ironing (\`a la Myerson) delivers a nondecreasing ironed index.

\section{Regularity for differentiation under the expectation}\label{app:regularity-diff}
We justify the steps leading to \eqref{eq:FOC_e_correct} and Lemma~\ref{lem:margprob}.

\paragraph{Dominated convergence / Leibniz rule.}
Assume: (i) $\eta$ has a continuous density $f_\eta$ with bounded tails; (ii) $\beta$ is piecewise $C^1$ with slope in $[0,1)$ and bounded image; (iii) $R'_e(e,\theta)$ is continuous and locally bounded uniformly in $e$ on compact sets. Then, for any integrable function $g(\hat G,e)$ that is piecewise $C^1$ in $e$ and dominated by an integrable envelope, we may differentiate inside the expectation by dominated convergence / Leibniz’s rule:
\[
\frac{\partial}{\partial e}\,\E\bigl[g(\hat G,e)\bigr]
=\E\Bigl[\frac{\partial}{\partial e}g(\hat G,e)\Bigr].
\]

\paragraph{Indicators and boundary sets.}
For events of the form $\{\hat G-\beta(\hat G)>0\}$, the boundary $\{\hat G-\beta(\hat G)=0\}$ has Lebesgue measure zero because $\eta$ has a density and $\beta$ is a.e.\ differentiable with bounded slope; hence the derivative of the indicator contributes no boundary term.
On threshold rules, $\beta'(\hat G)=0$ a.e., so the marginal-rescue term vanishes, yielding the expressions stated in Lemma~\ref{lem:margprob} and \eqref{eq:FOC_e_correct}.

\section{Technical Lemmas and Proofs}\label{app:proofs}

\noindent\textit{Note (implementable payout).} Throughout, the realized payout is
$p(\hat G,\hat\theta)=\mathbf 1\{\hat G>0\}\min\{\beta(\hat G),b(\hat\theta),\hat G\}$.
On the cap–slack and positive–signal set where $\beta(\hat G)<b(\hat\theta)$,
all derivatives below coincide with those under $p=\beta(\hat G)$.
When the $\min$ picks $\hat G$, boundary sets have Lebesgue measure zero under continuous noise, so the derivative contributions vanish a.e.

\begin{lemma}[Effort independence from report]\label{lem:e-indep}
Under Assumptions~\ref{ass:primitives}, \ref{ass:regularity} and
\ref{ass:beta-threshold}, the interior first-order condition
\eqref{eq:FOC_e_correct} can be rewritten
\[
R'_e\!\bigl(e^{\ast}(\theta),\theta\bigr)\bigl\{1+\varphi\,\Lambda\bigr\}=\phi,
\qquad\Lambda=\E\bigl[f_\eta(\hat G-\beta(\hat G))\bigr].
\]
All terms on the right depend only on the \emph{true} type~$\theta$; hence
$e^\star(\theta)$ is independent of the reported~$\hat\theta$ up to boundary-density terms on the cap-binding tail.
\end{lemma}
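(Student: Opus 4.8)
The plan is to start from the interior first-order condition \eqref{eq:FOC_e_correct} and show that, under the threshold restriction of Assumption~\ref{ass:beta-threshold}, every $\hat\theta$-dependent object drops out on the cap-slack branch, leaving an equation whose coefficients are functions of the true type $\theta$ alone. First I would invoke Assumption~\ref{ass:beta-threshold}: since $\beta$ is piecewise constant, $\beta'(\hat G)=0$ for Lebesgue-a.e.\ $\hat G$, and because $\hat G=G+\eta$ inherits a density from the absolutely continuous noise $\eta$ (Assumption~\ref{ass:regularity}), the event $\{\beta'(\hat G)\ne 0\}$ has probability zero. Consequently the marginal-rescue term $-\omega_b\,\E[\beta'(\hat G)\mathbf 1\{\beta(\hat G)<b(\hat\theta)\}]$ vanishes identically, and the factor $(1-\beta'(\hat G))$ multiplying the default density collapses to $1$. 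This is the only place the report $\hat\theta$ entered the bracket through the cap indicator, so its removal is the substantive simplification.

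Second, I would record that the interchanges of expectation and differentiation used above are licensed by Appendix~\ref{app:regularity-diff}: the bounded-tail density $f_\eta$, the piecewise-$C^1$ rule with slope in $[0,1)$, and the continuity of $R'_e$ supply the integrable envelope needed for the dominated-convergence/Leibniz argument, while the boundary set $\{\hat G-\beta(\hat G)=0\}$ is Lebesgue-null and contributes no extra term. After these simplifications the bracket in \eqref{eq:FOC_e_correct} reads $1+\varphi\,\E[f_\eta(\hat G-\beta(\hat G))]=1+\varphi\Lambda$, which is exactly the reduced form claimed in the statement.

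Third, and this is the crux, I would argue that $\Lambda$ carries no dependence on the report. The integrand $f_\eta(\hat G-\beta(\hat G))$ is a function of the realized signal $\hat G$ only, and the law of $\hat G$ is the convolution of the law of the pre-payout gap $G$ with that of $\eta$. By Step~1 of the reduction (Section~\ref{subsec:reduction}), the minimizers $(q^\star,I^\star,D^\star)$, and hence $G$, depend on the chosen effort $e$, the true type $\theta$, and the fixed instruments $(\tau,g,s,\bar D)$, but never on the announced $\hat\theta$; the report enters the payout exclusively through the cap $b(\hat\theta)$, which sits inside the $\min$ and the indicator and is inactive precisely on the cap-slack set where $\beta(\hat G)<b(\hat\theta)$. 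Hence the entire right-hand side of the reduced first-order condition depends on $(e,\theta)$ and the fixed instruments but not on $\hat\theta$, and any interior optimizer $e^\star$ is therefore a function of the true type alone (the concavity $R''_{ee}<0$ from Assumption~\ref{ass:primitives} delivers an interior root).

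The main obstacle I anticipate is the cap-binding tail. When the $\min$ selects $b(\hat\theta)$, the default event $\{p<G\}$ becomes $\{b(\hat\theta)<G\}$, whose boundary $\{G=b(\hat\theta)\}$ is $\hat\theta$-dependent; differentiating the default indicator in $e$ there produces a boundary-density contribution proportional to $f_\eta$ evaluated at the cap and scaled by $b(\hat\theta)$. This is exactly the term the statement flags, and it is the only channel through which the report can re-enter. I would control it with Assumption~\ref{ass:cap-slack}: the cap-binding branch has probability at most $\varepsilon$, so the omitted contribution is $O(\varepsilon)$ and is absorbed into the ``up to boundary-density terms on the cap-binding tail'' qualifier, in parallel with Remark~\ref{rem:defaultloss}. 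This confines all residual report dependence to a tail of vanishing measure and completes the argument.
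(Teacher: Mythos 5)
Your proposal is correct and follows essentially the same route as the paper, which establishes this lemma implicitly through the proof of Eq.~\eqref{eq:FOC_e_correct} (mollifier/dominated-convergence argument, then the observation that $\beta'(\hat G)=0$ a.e.\ under threshold rules kills the marginal-rescue term and collapses $(1-\beta'(\hat G))$ to $1$), with the cap-binding branch controlled as an $O(\varepsilon)$ boundary-density term via Assumption~\ref{ass:cap-slack}. Your added step making explicit that the law of $\hat G$ is report-independent (because $(q^\star,I^\star,D^\star)$ and hence $G$ do not depend on $\hat\theta$) is a detail the paper leaves implicit, but it is the same argument, not a different one.
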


\begin{lemma}[Marginal default probability on the signal branch]\label{lem:margprob}
Under Assumptions~\ref{ass:primitives}--\ref{ass:regularity}, let $\delta=\PP_\theta\!\bigl[p(\hat G,\hat\theta)<G\bigr]$. On the set where $\beta(\hat G)<b(\hat\theta)$ (cap slack),
\[
  \frac{\partial \delta}{\partial e}
  \;=\;
  -\,R'_e(e,\theta)\,\E\!\left[f_\eta\!\bigl(\hat{G}-\beta(\hat{G})\bigr)\,\bigl(1-\beta'(\hat G)\bigr)\right],
\]
and for \emph{threshold} rules (piecewise constant $\beta$), $\beta'(\hat G)=0$ a.e., so
$\displaystyle \frac{\partial \delta}{\partial e}=-R'_e(e,\theta)\,\E\!\left[f_\eta\!\bigl(\hat{G}-\beta(\hat{G})\bigr)\right]$.
\end{lemma}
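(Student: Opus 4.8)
The plan is to reduce to the cap--slack branch, where by the note preceding the lemma and Assumption~\ref{ass:cap-slack} the realized payout equals $p=\beta(\hat G)$ up to a set of measure zero (the set on which the $\min$ instead selects $\hat G$ or the cap $b$), so the default event is simply $\{\beta(\hat G)<G\}$. First I would record that effort enters the gap only through own--source revenue: from \eqref{eq:gap_pre}, $\partial G/\partial e=-R'_e(e,\theta)$, and since $\hat G=G+\eta$ with $\eta$ independent of $e$, also $\partial\hat G/\partial e=-R'_e(e,\theta)$. Thus $e$ acts on $\delta=\PP_\theta[\beta(\hat G)<G]$ as a pure location shift of the gap, which is the structural fact that makes the derivative collapse to a single boundary contribution.

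Next I would write $\delta$ as the expectation of the default indicator and differentiate under the expectation. The interchange is licensed by Assumption~\ref{ass:regularity} together with the dominated--convergence / Leibniz argument of Appendix~\ref{app:regularity-diff}: $\eta$ has a bounded--tail continuous density $f_\eta$, $\beta$ is piecewise $C^1$ with slope in $[0,1)$, and the default boundary $\{\beta(\hat G)=G\}$ has Lebesgue measure zero. Parametrizing that boundary by $\hat G$, the event $\beta(\hat G)<G=\hat G-\eta$ is equivalent to $\eta<\hat G-\beta(\hat G)$, so the critical noise value on the boundary is $Z(\hat G)=\hat G-\beta(\hat G)$. Differentiating the indicator then contributes the density of the crossing variable, $f_\eta$, evaluated at this boundary value, while the rate at which the boundary moves in $\hat G$ is the Jacobian $Z'(\hat G)=1-\beta'(\hat G)$, strictly positive because $\beta'<1$ and hence non--degenerate. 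Collecting the shift factor $\partial G/\partial e=-R'_e$ with this boundary density and Jacobian yields
\[
  \frac{\partial\delta}{\partial e}
  \;=\; -\,R'_e(e,\theta)\,\E\!\left[f_\eta\!\bigl(\hat G-\beta(\hat G)\bigr)\,\bigl(1-\beta'(\hat G)\bigr)\right],
\]
as claimed. The threshold case is then immediate: under Assumption~\ref{ass:beta-threshold}, $\beta$ is piecewise constant, so $\beta'(\hat G)=0$ at a.e.\ $\hat G$, the Jacobian factor collapses to $1$, and $\partial\delta/\partial e=-R'_e(e,\theta)\,\E[f_\eta(\hat G-\beta(\hat G))]$.

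I expect the main obstacle to be making the boundary term fully rigorous rather than merely formal: the default indicator is discontinuous, so the derivation rests on a co--area / implicit--function representation of the level set $\{\beta(\hat G)=G\}$, and one must check that the factor $1/\beta'$ produced by inverting $\beta$ along that level set is exactly cancelled once the expression is integrated against the density of the fiscal shock $\varepsilon$ that smooths the gap $G$ (equivalently, that the conditions $\beta'<1$ and continuous noise make $Z$ a bi--Lipschitz change of variables off a null set). I would therefore first establish the formula for $C^1$ rules via the Leibniz rule and the change of variables $\hat g\mapsto Z(\hat g)$, where the cancellation is transparent, and only then specialize to piecewise--constant threshold rules as the $\beta'\to0$ limit, so the clean cancellation is visible before the non--smooth case is taken. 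The remaining steps---the measure--zero boundary claim and the domination bound---follow directly from Assumption~\ref{ass:regularity} and Appendix~\ref{app:regularity-diff}.
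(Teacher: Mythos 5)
Your computational skeleton tracks the paper's proof (cap--slack reduction, the location shift $\partial_e G=\partial_e\hat G=-R'_e$, a boundary-density-times-Jacobian differentiation of the default indicator, then $\beta'=0$ a.e.\ for threshold rules), and you land on the stated formula. But there is a genuine gap at the differentiation step, and it is exactly the obstacle you defer to your last paragraph. The paper never differentiates the event you differentiate: its proof first \emph{identifies}, on the cap--slack branch, $\{p<G\}$ with $\{H>0\}$, where $H=\hat G-\beta(\hat G)$ --- note this quietly measures default against the \emph{signal} rather than the true gap --- so that the margin is a monotone function of the single random variable $\hat G$. The mollifier computation is then one-dimensional: $\partial_e H=-R'_e\bigl(1-\beta'(\hat G)\bigr)$, and the $(1-\beta')$ factor cancels against the Jacobian of $\hat G\mapsto H$ hidden in the boundary density, which is what makes the stated expression come out. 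You instead keep the literal default event $\{\beta(\hat G)<G\}=\{\eta<Z(\hat G)\}$ with $Z(\hat G)=\hat G-\beta(\hat G)$. There the crossing variable $\eta$ is itself a component of $\hat G=G+\eta$, so ``$f_\eta$ at the boundary times the Jacobian $Z'=1-\beta'$'' is not a legitimate chain rule: it amounts to differentiating the false representation $\delta=\E\bigl[F_\eta\bigl(Z(\hat G)\bigr)\bigr]$, which would be valid only if $\eta$ were independent of $\hat G$.

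The cancellation you hope for does not materialize for your event. Conditioning on $\varepsilon$ and solving the true boundary $\beta(G+\eta^{\ast})=G$ gives boundary sensitivity $\dd\eta^{\ast}/\dd G=(1-\beta')/\beta'$, not $1-\beta'$; equivalently, writing $G=G_0(e)+\varepsilon$ and integrating by parts in $\hat g$ against the joint density yields, for $C^1$ rules,
\[
\frac{\partial\delta}{\partial e}
=-\,R'_e(e,\theta)\int \bigl(1-\beta'(\hat g)\bigr)\,
f_\eta\bigl(\hat g-\beta(\hat g)\bigr)\,
f_\varepsilon\bigl(\beta(\hat g)-G_0(e)\bigr)\,\dd\hat g ,
\]
which weights the integrand by the $\varepsilon$-density on the default boundary, \emph{not} by the law of $\hat G$ as the lemma's expectation does. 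Worse, for piecewise-constant $\beta$ the margin $Z$ jumps \emph{downward} at every payout threshold, so $\partial\delta/\partial e$ picks up additional opposite-signed jump terms your argument omits entirely: with a step rule, a smaller gap makes the signal less likely to trigger the next payout step, and the literal default probability can \emph{rise} with effort. Hence your plan --- prove the $C^1$ case and pass to the threshold limit --- cannot recover the stated formula for your event; the limit is singular and even the sign can flip. The repair is not a more careful co-area argument but adopting the paper's event identification $\{p<G\}\equiv\{\hat G-\beta(\hat G)>0\}$; granting that identification, your Leibniz/boundary computation is precisely the paper's proof.
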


\begin{proof}[Proof of Lemma~\ref{lem:lambdaT}]
Recall $\hat G=G+\eta$ and $\partial_T \hat G=\partial_T G=-1$. Write
\[
p(\hat G,\hat\theta)=\mathbf 1\{\hat G>0\}\min\{\beta(\hat G),b(\hat\theta),\hat G\}.
\]
Since $\beta(0)=0$ and $\beta'\in[0,1)$, for $\hat G\ge 0$ we have $\beta(\hat G)\le \hat G$. Hence on $\{\hat G>0\}$ and cap–slack $\{\beta(\hat G)<b(\hat\theta)\}$, the minimum is $\beta(\hat G)$ and
\[
\partial_T p = \beta'(\hat G)\,\partial_T \hat G = -\,\beta'(\hat G).
\]
On the cap-binding set $\{\beta(\hat G)\ge b(\hat\theta)\}$, $p=b(\hat\theta)$ so $\partial_T p=0$. Therefore,
\[
\frac{\partial}{\partial T}\E[p(\hat G,\hat\theta)\mid\theta]
= -\,\E\!\left[\beta'(\hat G)\,\mathbf 1\{\beta(\hat G)<b(\hat\theta)\}\,\middle|\,\theta\right]+\text{boundary terms}.
\]
The boundary terms arise only when \(arg min\{\beta(\hat G),\, b(\hat\theta),\, \hat G\}\) changes; 
since \(\eta\) has a continuous density and \(\beta\) is a.e.\ differentiable with slope \(<1\), 
those switch sets have Lebesgue measure zero and their contribution vanishes under dominated convergence. Hence $\lambda_T(\theta)=\omega_T-\omega_b\,\partial_T\E[p|\theta]$ reduces to the stated expressions; in particular, for threshold $\beta$ we obtain $\lambda_T(\theta)\equiv \omega_T$. \qedhere
\end{proof}

\begin{lemma}[Monotonicity of the allocation index]\label{lem:mono-fixed}
Under IC and Assumption~\ref{ass:sc}, the implemented allocation index
\[
  x(\theta)\;=\;\lambda_T(\theta)\,T(\theta)+\omega_b(\theta)\,\tilde b(\theta;\theta)
\]
is weakly increasing in $\theta$. When ironing is required under IFR, the ironed allocation is nondecreasing.
\end{lemma}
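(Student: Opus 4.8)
The plan is to establish weak monotonicity of the implemented index $x(\theta)=x(\theta;\theta)$ by a direct two-step comparison, combining incentive compatibility with the fact that, for a \emph{fixed} report, the index is nondecreasing in the \emph{true} type. First I would record the two ingredients explicitly. The incentive constraint for the higher type states that type $\theta_2$ weakly prefers truthful reporting to mimicking $\theta_1$; since the report-independent term $K(\theta_2)$ appears on both sides and cancels, this reads $x(\theta_2;\theta_2)\ge x(\theta_1;\theta_2)$. The second ingredient is $\partial_\theta x(\hat\theta;\theta)\ge 0$ for every fixed report $\hat\theta$, which is what Assumption~\ref{ass:sc} and Assumption~\ref{ass:mlrp} jointly deliver: since $\min\{\beta(\hat G),b(\hat\theta)\}$ is a nondecreasing function of $\hat G$ (because $\beta$ is nondecreasing and $b(\hat\theta)$ is a fixed constant), MLRP yields first-order stochastic dominance in $\theta$, so $\partial_\theta\tilde b(\hat\theta;\theta)\ge 0$; the weights $\lambda_T(\theta),\omega_b(\theta)$ are nonnegative and, in the threshold baseline, constant.

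With these in hand the argument is a short chain. Fix $\theta_2>\theta_1$. By the higher type's IC, $x(\theta_2)=x(\theta_2;\theta_2)\ge x(\theta_1;\theta_2)$. By fixed-report monotonicity in the true type, applied to the report $\theta_1$, we have $x(\theta_1;\theta_2)\ge x(\theta_1;\theta_1)=x(\theta_1)$. Concatenating gives $x(\theta_2)\ge x(\theta_1)$, the claimed weak monotonicity. The same conclusion follows from adding the two IC inequalities and signing the cross term via single crossing, as in Appendix~\ref{app:sc-proof}; the one-sided chain above is the most direct route and avoids having to sign the symmetric remainder.

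For the ironing clause I would proceed as follows. Under IFR the hazard $\haz(\theta)=f(\theta)/\barF(\theta)$ is nondecreasing, so the pointwise cap in~\eqref{eq:b-star} is nondecreasing in $\theta$; then $\tilde b(\theta;\theta)$ is nondecreasing (increasing in the cap level by Lemma~\ref{lem:capcalc}, and in the true type by MLRP), the implied index inherits monotonicity, and no ironing is needed. When IFR fails and the virtual weight $\bigl(\gamma\omega_b/\lambda_T(\theta)\bigr)\haz(\theta)$ is nonmonotone, the unconstrained candidate violates implementability; I would then replace it by its monotone convex-hull envelope in the manner of Myerson, pooling types on the flat (ironed) intervals. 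The ironed index is nondecreasing by construction, and the pooled allocation remains IC because it is constant on each ironing interval and monotone across intervals, which gives the second assertion.

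The step I expect to be the main obstacle is verifying the fixed-report monotonicity $\partial_\theta x(\hat\theta;\theta)\ge 0$ once the weights are allowed to vary with type, i.e.\ on the discretionary linear branch of Appendix~\ref{subsec:disc} or in the variable-$\omega_b$ extension of Appendix~\ref{app:omega-variable}. There the differentiated weight terms $\lambda_T'(\theta)\,T(\hat\theta)+\omega_b'(\theta)\,\tilde b(\hat\theta;\theta)$ enter alongside the nonnegative term $\omega_b(\theta)\,\partial_\theta\tilde b(\hat\theta;\theta)$, so one must check they do not overturn the sign. In the threshold baseline the weights are constant and this term vanishes, so the obstacle is absent; a clean sufficient condition away from the baseline is that both weights be nondecreasing in $\theta$ (using $T(\hat\theta)\ge 0$ from LL, all three terms are then nonnegative), while in the genuinely hard case where $\lambda_T$ \emph{decreases}—a larger gap making the linear segment more likely—one must verify that $\omega_b(\theta)\,\partial_\theta\tilde b(\hat\theta;\theta)$ dominates the negative weight term.
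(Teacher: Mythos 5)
Your proof is correct, and it takes a genuinely different (and in fact tighter) route than the paper. The paper argues by contradiction using \emph{both} incentive constraints: supposing $x(\theta_2)<x(\theta_1)$, it adds $U_L(\theta_2,\theta_2)\ge U_L(\theta_1,\theta_2)$ and $U_L(\theta_1,\theta_1)\ge U_L(\theta_2,\theta_1)$ and appeals to the Spence--Mirrlees condition of Assumption~\ref{ass:sc} to sign the cross term. You instead run a one-sided chain: the higher type's IC gives $x(\theta_2;\theta_2)\ge x(\theta_1;\theta_2)$ (the $K(\theta_2)$ terms cancel), and fixed-report monotonicity in the \emph{true} type, $\partial_\theta x(\hat\theta;\theta)\ge 0$ --- which is precisely what Assumption~\ref{ass:mlrp} delivers through $\partial_\theta\tilde b(\hat\theta;\theta)\ge 0$ for nondecreasing $\beta$ and a fixed cap $b(\hat\theta)$ --- gives $x(\theta_1;\theta_2)\ge x(\theta_1;\theta_1)$; concatenation finishes. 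Your decomposition buys something real here: because the index $x(\hat\theta;\theta)$ depends on the true type and not only on the report, the textbook two-IC argument does not close by itself. Writing $x_{ij}=x(\theta_i;\theta_j)$, adding the two ICs yields only $x_{22}+x_{11}\ge x_{12}+x_{21}$, from which $x_{22}\ge x_{11}$ does not follow without an extra step --- exactly the fixed-report monotonicity you make explicit --- so the paper's ``subtracting and using single crossing'' is best read as shorthand for your argument. You also handle the ironing clause more concretely (monotone convex-hull pooling \`a la Myerson, with IC preserved on pooled intervals) where the paper merely asserts the conclusion, and your closing caveat is well taken: with type-varying weights, the terms $\lambda_T'(\theta)T(\hat\theta)+\omega_b'(\theta)\tilde b(\hat\theta;\theta)$ can overturn fixed-report monotonicity, a gap that equally affects the paper's invocation of Assumption~\ref{ass:sc}, whose MLRP-based verification covers only the $\omega_b(\theta)\,\partial_\theta\tilde b$ term; your sufficient condition (both weights nondecreasing, $T\ge 0$ by LL) is a sensible patch.
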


\begin{lemma}[Monotonicity under IFR and caps]\label{lem:mono_cap}
If $\lambda_T$ and $\omega_b$ are locally constant and $f/\barF$ is
increasing (IFR), then the optimal cap $b^{\ast}(\theta)=\min\{\bar b,\max\{0,\tilde b(\theta)\}\}$ is weakly increasing,
where $\tilde b(\theta)=\frac{\gamma\omega_b}{\kappa\lambda_T}\frac{f(\theta)}{\barF(\theta)}-\frac{\alpha}{\kappa}$.
If $\lambda_T(\theta)$ varies, a sufficient condition is that
$\lambda_T(\theta)$ is weakly decreasing and $f(\theta)/\barF(\theta)$ is
increasing; otherwise, apply standard ironing on the virtual term $\frac{\gamma\omega_b}{\lambda_T(\theta)}\frac{f(\theta)}{\barF(\theta)}$.
\end{lemma}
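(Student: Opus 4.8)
The plan is to reduce everything to the monotonicity of the single scalar function $\tilde b(\theta)=\frac{\gamma\omega_b}{\kappa\lambda_T(\theta)}\,\haz(\theta)-\frac{\alpha}{\kappa}$, where $\haz(\theta)=f(\theta)/\barF(\theta)$. The key structural observation is that $b^{\ast}=\Psi\circ\tilde b$, where $\Psi(y)=\min\{\bar b,\max\{0,y\}\}$ is the \emph{clamp} map. Since $\Psi$ is the composition of the two nondecreasing maps $y\mapsto\max\{0,y\}$ and $z\mapsto\min\{\bar b,z\}$, it is itself nondecreasing; hence $\theta_2\ge\theta_1$ together with $\tilde b(\theta_2)\ge\tilde b(\theta_1)$ immediately yields $b^{\ast}(\theta_2)\ge b^{\ast}(\theta_1)$. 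So the entire lemma hinges on signing $\tilde b$, and the pointwise first-order characterization \eqref{eq:b-star} that defines $\tilde b$ is valid wherever the interior branch applies by the derivative formulas of Lemma~\ref{lem:capcalc}.

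For the locally constant case I would note that $\frac{\gamma\omega_b}{\kappa\lambda_T}$ is a strictly positive constant, so $\tilde b$ is a positive affine transform of $\haz$; IFR makes $\haz$ weakly increasing, hence $\tilde b$ is weakly increasing and the clamp argument closes the first claim. For the varying-$\lambda_T$ case I would write $\tilde b(\theta)+\tfrac{\alpha}{\kappa}=c(\theta)\,\haz(\theta)$ with $c(\theta)=\frac{\gamma\omega_b}{\kappa\lambda_T(\theta)}$. When $\lambda_T$ is weakly decreasing, $c$ is weakly increasing, and both $c$ and $\haz$ are nonnegative. I would then invoke the elementary fact that the product of two nonnegative weakly increasing functions is weakly increasing: for $\theta_2\ge\theta_1$, $c(\theta_2)\haz(\theta_2)\ge c(\theta_1)\haz(\theta_2)\ge c(\theta_1)\haz(\theta_1)$, where the first inequality uses $\haz(\theta_2)\ge0$ and the second uses $c(\theta_1)\ge0$. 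Adding back the constant $-\tfrac{\alpha}{\kappa}$ preserves monotonicity, and the clamp argument again closes the second claim.

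For the residual case, when the virtual weight $\frac{\gamma\omega_b}{\lambda_T(\theta)}\haz(\theta)$ fails to be monotone, I would fall back on the standard Myerson ironing already recorded in Appendix~\ref{app:sc-proof}: replace the virtual weight by its ironed version---the derivative of the convex hull of its running integral---which is nondecreasing by construction and leaves the objective unchanged on flat (pooled) segments, thereby delivering a nondecreasing ironed index and hence a nondecreasing ironed cap without sacrificing implementability or optimality within the monotone class.

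The hard part is essentially bookkeeping rather than analysis: the only non-cosmetic step is confirming that the product $c(\theta)\haz(\theta)$ inherits monotonicity, which needs the three sign conditions $c\ge0$, $\haz\ge0$, and weak monotonicity of each factor (here $c$ increasing requires $\lambda_T>0$ and weakly decreasing, which is consistent with the $\omega_T>\omega_b$ regime). Once these are verified, the clamp composition and the ironing fallback are both routine, so I do not anticipate a substantive obstruction.
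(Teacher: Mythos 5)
Your monotonicity mechanics are correct and, on that portion, actually more explicit than the paper's own proof: the paper simply asserts that the interior expression is increasing under IFR and that projection onto $[0,\bar b]$ preserves weak monotonicity, whereas you spell out the clamp-composition argument and the product-of-nonnegative-increasing-functions inequality $c(\theta_2)\haz(\theta_2)\ge c(\theta_1)\haz(\theta_2)\ge c(\theta_1)\haz(\theta_1)$, and your ironing fallback matches the paper's. The substantive difference---and the one genuine gap---is where the formula $\tilde b(\theta)$ comes from. The paper's proof is self-contained on this point: it derives the pointwise KKT condition
\[
\alpha+\kappa\, b(\theta)\;=\;\frac{\gamma\,\omega_b(\theta)}{\lambda_T(\theta)}\,\haz(\theta)
\]
by equating the marginal cost $(\alpha+\kappa b(\theta))\,\PP_\theta[\beta(\hat G)\ge b(\theta)]$ of relaxing the cap (from Lemma~\ref{lem:capcalc}) against the Myerson virtual marginal benefit $\frac{\gamma}{\lambda_T(\theta)}\,\omega_b(\theta)\,\haz(\theta)\,\PP_\theta[\beta(\hat G)\ge b(\theta)]$, and canceling the common tail probability. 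You instead import the formula from \eqref{eq:b-star} and assert it is valid \emph{by the derivative formulas of Lemma~\ref{lem:capcalc}}. That attribution is incomplete: Lemma~\ref{lem:capcalc} delivers only the cost side of the comparison; the hazard term $\haz(\theta)$---the entire reason IFR is relevant---enters through the envelope/virtual-surplus argument, which you never invoke. Leaning on \eqref{eq:b-star} is also delicate for a second reason: the proof of Proposition~\ref{prop:opt_cap} cites this very lemma to establish that the pointwise optimum is monotone (hence implementable), so the lemma's proof should not in turn rest on Proposition~\ref{prop:opt_cap}; to keep the dependency acyclic, the KKT condition must be derived inside the lemma, exactly as the paper does. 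This is a fixable foundation gap rather than a flaw in your reasoning: insert the two-line marginal cost/benefit comparison with the tail-probability cancellation, and the rest of your argument goes through verbatim.
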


\begin{proof}[Proof of Eq.~\eqref{eq:FOC_e_correct} (first-order condition for $e$)]
Fix $(\tau,s,\bar D,\beta,b)$ and true type $\theta$. The municipality’s $t{=}1$ objective as a function of $e$ (dropping terms independent of $e$) is
\[
\Phi(e)\;=\;\E\!\Big[R(e,\theta)-\phi e\;-\;\varphi\,\mathbf 1\{p(\hat G,\hat\theta)<G\}\;+\;\omega_b\,p(\hat G,\hat\theta)\Big],
\]
with $\,\hat G=G(e)+\eta\,$ and $\,G(e)=C(\cdot)-[R(e,\theta)+\tau+g]+\ldots\,$ so that $\partial_e \hat G=\partial_e G=-R'_e(e,\theta)$.

\smallskip
\noindent\textbf{Step 1 (Justifying differentiation under $\E$).}
By Assumptions~\ref{ass:regularity}--\ref{ass:beta-threshold}, $\eta$ has a continuous density $f_\eta$ with bounded tails, $\beta$ is piecewise $C^1$ with slope in $[0,1)$ and bounded image, and $R'_e$ is continuous and locally bounded. Hence all integrands below admit a uniform integrable envelope, so dominated convergence / Leibniz rule applies and we may interchange $\partial_e$ and $\E$.

\smallskip
\noindent\textbf{Step 2 (Derivative of the default indicator).}
Define $H(e,\eta)=\hat G-\beta(\hat G)$. On the set where $\beta$ is differentiable,
\[
\partial_e H(e,\eta)\;=\;(\partial_e \hat G)\,\bigl(1-\beta'(\hat G)\bigr)\;=\;-R'_e(e,\theta)\,\bigl(1-\beta'(\hat G)\bigr).
\]
Approximate the Heaviside $\mathbf 1\{u>0\}$ by smooth $s_n(u)$ with $s_n'\to \delta_0$ in the sense of distributions, and apply dominated convergence:
\[
\frac{\partial}{\partial e}\E\!\big[\mathbf 1\{H(e,\eta)>0\}\big]
=\lim_{n\to\infty}\E\!\big[s_n'(H)\,\partial_e H\big]
= \E\!\left[\delta_0\!\big(H\big)\,\partial_e H\right].
\]
Since $H=\hat G-\beta(\hat G)$ is a monotone $C^1$ transformation of $\hat G$ with slope $1-\beta'(\hat G)\in(0,1]$ a.e., the density of $H$ at $0$ equals $f_\eta\!\big(\hat G-\beta(\hat G)\big)$ a.e. Hence
\[
\frac{\partial}{\partial e}\E\!\big[\mathbf 1\{p(\hat G,\hat\theta)<G\}\big]
=\frac{\partial}{\partial e}\E\!\big[\mathbf 1\{H>0\}\big]
= -\,R'_e(e,\theta)\,\E\!\Big[f_\eta\!\big(\hat G-\beta(\hat G)\big)\,\bigl(1-\beta'(\hat G)\bigr)\Big],
\]
where we have used that on the cap–slack set $\{\,\beta(\hat G)<b(\hat\theta)\,\}$ the event $\{p<G\}$ coincides with $\{H>0\}$, while on the cap-binding set the boundary-density contribution is $O(\varepsilon)$ by Assumption~\ref{ass:cap-slack} and does not affect the sign/comparative statics.

\smallskip
\noindent\textbf{Step 3 (Derivative of the realized payout).}
Write $p(\hat G,\hat\theta)=\mathbf 1\{\hat G>0\}\min\{\beta(\hat G),b(\hat\theta),\hat G\}$. Since $\beta(0)=0$ and $\beta'\in[0,1)$, we have $\beta(\hat G)\le \hat G$ for all $\hat G\ge 0$; thus on $\{\hat G>0\}$ and cap–slack $\{\beta(\hat G)<b(\hat\theta)\}$, $p=\beta(\hat G)$ and
\[
\partial_e p \;=\; \beta'(\hat G)\,\partial_e \hat G \;=\; -\,\beta'(\hat G)\,R'_e(e,\theta).
\]
On the cap-binding set $p=b(\hat\theta)$ so $\partial_e p=0$; hence
\[
\frac{\partial}{\partial e}\E[p(\hat G,\hat\theta)]
= -\,R'_e(e,\theta)\,\E\!\Big[\beta'(\hat G)\,\mathbf 1\{\beta(\hat G)<b(\hat\theta)\}\Big]\;+\;O(\varepsilon).
\]

\smallskip
\noindent\textbf{Step 4 (FOC).}
Collecting terms,
\[
\Phi'(e)=R'_e(e,\theta)-\phi-\varphi\,\frac{\partial}{\partial e}\E[\mathbf 1\{p<G\}]+\omega_b\,\frac{\partial}{\partial e}\E[p].
\]
Using the expressions above and canceling the common factor $R'_e(e,\theta)$, the $O(\varepsilon)$ terms vanish by Assumption~\ref{ass:cap-slack}, and the first-order condition $\Phi'(e^\ast)=0$ becomes
\[
R'_e\!\bigl(e^{\ast}(\theta),\theta\bigr)\Big\{1+\varphi\,\E\!\big[f_\eta(\hat G-\beta(\hat G))\bigl(1-\beta'(\hat G)\bigr)\big]-\omega_b\,\E\!\big[\beta'(\hat G)\,\mathbf 1\{\beta(\hat G)<b(\hat\theta)\}\big]\Big\}=\phi,
\]
which is Eq.~\eqref{eq:FOC_e_correct}. \end{proof}

\begin{proof}[Proof of Lemma~\ref{lem:margprob}]
Let $\delta(e)=\PP_\theta[p(\hat G,\hat\theta)<G]$. On the cap–slack event $\{\beta(\hat G)<b(\hat\theta)\}$ we have $\{p<G\}=\{\hat G-\beta(\hat G)>0\}= \{H>0\}$. Repeating the mollifier argument in the proof of Eq.~\eqref{eq:FOC_e_correct},
\[
\delta'(e)=\frac{\partial}{\partial e}\E[\mathbf 1\{H>0\}]=\E\!\left[\delta_0(H)\,\partial_e H\right]
= -\,R'_e(e,\theta)\,\E\!\left[f_\eta\!\big(\hat G-\beta(\hat G)\big)\,\bigl(1-\beta'(\hat G)\bigr)\right].
\]
For \emph{threshold} $\beta$ we have $\beta'(\hat G)=0$ a.e., hence
\[
\delta'(e)=-\,R'_e(e,\theta)\,\E\!\left[f_\eta\!\big(\hat G-\beta(\hat G)\big)\right].
\]
On the cap-binding set, the event $\{p<G\}$ becomes $\{b(\hat\theta)<G\}$ and contributes only a boundary-density term with probability $O(\varepsilon)$, which does not alter the formula. \qedhere
\end{proof}

\begin{proof}[Proof of Lemma~\ref{lem:capcalc}]
Let $Y=\beta(\hat G)$ with c.d.f.\ $F_\beta(\cdot\mid\theta)$ and continuous density. Then
\[
\E[\min\{Y,b\}\mid\theta]=\int_{0}^{\infty}\min\{y,b\}\,\dd F_\beta(y\mid\theta)
=\int_{0}^{b}y\,\dd F_\beta+\int_{b}^{\infty}b\,\dd F_\beta.
\]
Differentiating w.r.t.\ $b$ and using the fundamental theorem for Lebesgue integrals gives
\[
\partial_b \E[\min\{Y,b\}\mid\theta]=\PP_\theta[Y\ge b].
\]
Similarly,
\[
\E[\min\{Y,b\}^2\mid\theta]=\int_{0}^{b}y^2\,\dd F_\beta+\int_{b}^{\infty}b^2\,\dd F_\beta
=\int_{0}^{b}\PP_\theta[Y\ge t]\,(2t)\,\dd t,
\]
so $\partial_b \E[\min\{Y,b\}^2\mid\theta]=2b\,\PP_\theta[Y\ge b]$. \qedhere
\end{proof}

\begin{proof}[Proof of Proposition~\ref{prop:reduction} (reduction to quasi-linearity)]
Fix $(\tau,s,\bar D,\beta)$ and true type $\theta$. Minimizing over $(q,I,D)$ yields reduced cost $C_0(\theta)$ independent of the report. By Assumption~\ref{ass:report-invariance}, the interior $e^\ast(\theta)$ is report-independent up to $O(\varepsilon)$ boundary terms. Hence interim utility can be written as
\[
U_L(\hat\theta,\theta)=\lambda_T(\theta)\,T(\hat\theta)+\omega_b\,\tilde b(\hat\theta;\theta)+K(\theta),
\]
with $\lambda_T(\theta)=\omega_T-\omega_b\,\partial_T\E[p(\hat G,\hat\theta)\mid\theta]$ and $\tilde b(\hat\theta;\theta)=\E[\min\{\beta(\hat G),b(\hat\theta)\}\mid\theta]$. By Lemma~\ref{lem:lambdaT}, under threshold $\beta$ and continuous noise the boundary terms vanish and $\partial_T\E[p|\theta]=0$, hence $\lambda_T(\theta)=\omega_T$ a.e. This delivers the quasi-linear reduced form \eqref{eq:UL-reduced}. \qedhere
\end{proof}

\begin{proof}[Proof of Lemma~\ref{lem:mono-fixed}]
Let $x(\hat\theta;\theta)=\lambda_T(\theta)T(\hat\theta)+\omega_b(\theta)\tilde b(\hat\theta;\theta)$ and $U_L(\hat\theta,\theta)=x(\hat\theta;\theta)+K(\theta)$. By Assumption~\ref{ass:sc}, $\partial^2 U_L/\partial\theta\,\partial x\ge 0$ (Spence–Mirrlees). Suppose, towards a contradiction, that there exist $\theta_2>\theta_1$ with $x(\theta_2)<x(\theta_1)$. Then IC implies
\[
U_L(\theta_2,\theta_2)\ge U_L(\theta_1,\theta_2),\qquad
U_L(\theta_1,\theta_1)\ge U_L(\theta_2,\theta_1).
\]
Subtracting and using single crossing yields $x(\theta_2)\ge x(\theta_1)$, a contradiction. Hence $x(\theta)$ is weakly increasing. When ironing is needed (IFR with nonmonotone virtual term), the ironed allocation preserves nondecreasingness. \qedhere
\end{proof}

\begin{proof}[Proof of Lemma~\ref{lem:mono_cap}]
Fix $\theta$ and consider an incremental increase of $b(\theta)$ by $\dd b$ while holding $b$ elsewhere fixed. By Lemma~\ref{lem:capcalc}, the marginal increase in the Province’s expected cost at type $\theta$ equals
\[
(\alpha+\kappa b(\theta))\,\PP_\theta[\beta(\hat G)\ge b(\theta)]\,\dd b.
\]
By Myerson’s envelope for direct mechanisms, the marginal (virtual) benefit from relaxing the cap at $\theta$ equals
\[
\frac{\gamma}{\lambda_T(\theta)}\,\omega_b(\theta)\,\haz(\theta)\,\PP_\theta[\beta(\hat G)\ge b(\theta)]\,\dd b,
\]
where $\haz(\theta)=f(\theta)/\bar F(\theta)$ under IFR. Equating marginal cost and benefit cancels the common tail probability and yields the pointwise KKT condition
\[
\alpha+\kappa b(\theta)=\frac{\gamma\,\omega_b(\theta)}{\lambda_T(\theta)}\,\haz(\theta).
\]
If $\lambda_T,\omega_b$ are locally constant, this implies $b(\theta)=\frac{1}{\kappa}\Big(\frac{\gamma\omega_b}{\lambda_T}\haz(\theta)-\alpha\Big)$, which is increasing in $\theta$ because $h$ is increasing under IFR. Projection onto $[0,\bar b]$ preserves weak monotonicity. If $\lambda_T(\theta)$ varies with $\theta$, a sufficient condition for the RHS to be weakly increasing is that $\lambda_T$ be weakly decreasing while $h$ is weakly increasing; otherwise standard ironing of the virtual term $\frac{\gamma\omega_b}{\lambda_T(\theta)}\haz(\theta)$ restores a nondecreasing $b^\ast(\cdot)$. \qedhere
\end{proof}

\begin{proof}[Proof of Proposition~\ref{prop:opt_cap}]
Work with the reduced form, threshold $\beta$, and IFR so that ironing yields a nondecreasing allocation. The Province minimizes
\[
\E_\theta\Big[\E\big[\alpha\,\min\{\beta(\hat G),b(\theta)\}+\tfrac{\kappa}{2}\min\{\beta(\hat G),b(\theta)\}^2\mid\theta\big]+\gamma\,T(\theta)\Big]
\]
subject to IC/IR/LL and monotonicity. Using Lemma~\ref{lem:capcalc}, the pointwise marginal cost (at type $\theta$) of increasing $b(\theta)$ equals $(\alpha+\kappa b(\theta))\,\PP_\theta[\beta\ge b(\theta)]$. By Myerson’s lemma, the virtual marginal benefit equals $(\gamma/\lambda_T)\,\omega_b\,\haz(\theta)\,\PP_\theta[\beta\ge b(\theta)]$ with $\haz(\theta)=f/\bar F$. Equating and canceling the common tail probability gives the interior solution
\[
b(\theta)=\frac{1}{\kappa}\left(\frac{\gamma\,\omega_b}{\lambda_T}\,\haz(\theta)-\alpha\right).
\]
Projection onto $[0,\bar b]$ yields \eqref{eq:b-star}, and IFR implies monotonicity (Lemma~\ref{lem:mono_cap}). To recover $T^\ast$, note that under $\lambda_T\equiv\omega_T$ the implemented index $x(\theta)=\omega_T\,T(\theta)+\omega_b\,\tilde b(\theta;\theta)$ must be nondecreasing; holding $x$ feasible implies $\dd T=-(\omega_b/\omega_T)\dd \tilde b$ a.e., so integrating from $\theta^{\min}$ (where $\tilde b^\ast=0$ by definition) gives \eqref{eq:T-star} with $T^\ast(\theta^{\min})=0$ by IR normalization. \qedhere
\end{proof}

\begin{proof}[Proof of Proposition~\ref{lem:nobailout}]
At $b\equiv 0$, the marginal expected cost of relaxing the cap at $\theta$ is $\alpha$ (Lemma~\ref{lem:capcalc}), while the virtual marginal benefit equals $(\gamma/\lambda_T)\,\omega_b\,\haz(\theta)=(\gamma\omega_b/\omega_T)\,\haz(\theta)$ under the threshold-$\beta$ baseline. If $\alpha\ge (\gamma\omega_b/\omega_T)\,\haz(\theta)$ for all $\theta$, then the KKT condition is nonnegative everywhere and $b=0$ is pointwise optimal. Otherwise at any $\theta^\sharp\in\arg\max \haz(\theta)$ with strict inequality, increasing $b(\theta^\sharp)$ strictly reduces the objective, so $b^\ast\equiv 0$ cannot be optimal. \qedhere
\end{proof}

\begin{proof}[Proof of Proposition~\ref{prop:welfare}]
Total expected welfare equals the sum of municipal interim utilities minus provincial costs:
\[
W(T,b)=\E_\theta[V(\theta)]-\E_\theta\Big[\gamma T(\theta)+\E\{\alpha p+\tfrac{\kappa}{2}p^2\mid\theta\}\Big],
\quad p=\min\{\beta(\hat G),b(\theta)\}.
\]
Under IC with $V(\underline\theta)=\underline U$, the envelope formula (Remark~\ref{rem:lambdaT-const}) gives
\[
V'(\theta)=\lambda_T'(\theta)T(\theta)+\omega_b\,\partial_\theta \tilde b(\theta;\theta)+K'(\theta).
\]
Integrating and substituting into $W$ yields a virtual-surplus functional in which the $\theta$–wise marginal effect of $b(\theta)$ is exactly the difference between the virtual benefit $(\gamma/\lambda_T)\omega_b \haz(\theta)$ and the marginal expected cost $\alpha+\kappa b(\theta)$ (times the common tail probability). Hence maximizing $W$ subject to IC/IR/LL and monotonicity is equivalent to the pointwise KKT condition used in Proposition~\ref{prop:opt_cap}; the resulting allocation is therefore second–best efficient among all IC–IR–LL mechanisms. \qedhere
\end{proof}

\end{appendices}
\clearpage   

\bibliographystyle{apalike}
\bibliography{references}
\end{document}